\pdfoutput=1
\documentclass[11pt]{article}
\usepackage[letterpaper, left=1in, right=1in, top=1in,bottom=1in]{geometry}

\usepackage[boxed]{algorithm2e}

\SetAlFnt{\small}
\SetAlCapFnt{\small}
\SetAlCapNameFnt{\small}
\SetAlCapHSkip{0pt}
\IncMargin{-\parindent}

%% packages
\usepackage{color}              % Need the color package
\usepackage[suppress]{color-edits}
\addauthor{et}{black}
\addauthor{pj}{black}
\addauthor{etn}{blue}
\usepackage{graphicx}
\usepackage{amsmath,amssymb,amsthm}
\usepackage{geometry}
\usepackage{tikz}
\usetikzlibrary{chains,arrows}
\usepackage{wrapfig}

%% theorems
\theoremstyle{plain}
\newtheorem{proposition}{Proposition}
\newtheorem{theorem}{Theorem}[section]
\newtheorem{lemma}[theorem]{Lemma}
\newtheorem{corollary}[theorem]{Corollary}
\theoremstyle{definition}

\def\squareforqed{\hbox{\rule{2.5mm}{2.5mm}}}

\def\QED{\ifmmode\squareforqed % in mathmode : print just the square
  \else{\nobreak\hfil   % \hfil to end of current line
    \penalty50                 % penalty 50 for breaking the line here
    \hskip1em                  % leave at least 1em before the square
    \null                      % \hbox{}
    \nobreak                   % prohibit line break
    \hfil                      % another \hfil (if a break occurred)
    \squareforqed              % put the square here
    \parfillskip=0pt           % the line really ends here
    \finalhyphendemerits=0     % ignore a hyphen on the line above
    \endgraf}                  % end the paragraph
  \fi}

\def\blksquare{\rule{2mm}{2mm}}
\def\qedsymbol{\blksquare}
\newcommand{\bg}[1]{\medskip\noindent{\bf #1}}
\newcommand{\ed}{{\hfill\qedsymbol}\medskip}

%\newenvironment{proof}{\bg{Proof : }}{\ed}

%% Paul: The following is only needed for having a primed version of the main
%% theorem and can be removed otherwise

%% for custom theorem numbering
\newtheorem{thm}{Theorem}

%% referencing commands

%\newcommand{\SubmissionOmit}[1]{#1} %version for longer text
%\newcommand{\FullversionOmit}[1]{} %version for longer text
\newcommand{\SubmissionOmit}[1]{} %version for submission
 %version for submission

%\newtheorem{theorem}{Theorem}
%%\newtheorem{lemma}[theorem]{Lemma}
%\newtheorem{prop}[theorem]{Proposition}
%S\newtheorem{claim}[theorem]{Claim}
%\newtheorem{subclaim}{Claim}[theorem]
%\newtheorem{cor}[theorem]{Corollary}
%%\newtheorem{corollary}[theorem]{Corollary}
%\newtheorem{observation}[theorem]{Observation}
%\newtheorem{conj}[theorem]{Conjecture}
%\newtheorem{fact}[theorem]{Fact}

%% abbreviations

%% notation

%\newcommand{\Ex}[1]{\mbox{\rm\bf E}\left[#1\right]}

%% comments
%\newcommand{\pd}[1]{{\bf xx pd: #1 xx}}

\newcommand{\fullversion}[1]{}

%\sloppy

\usepackage[boxed]{algorithm2e}

\usepackage{amsmath,amsfonts,amssymb,bbm}

\usepackage{dsfont}
\usepackage{ifthen}
\usepackage{commath}
\usepackage{graphicx}
\usepackage{enumerate}
\usepackage{color}
\usepackage{tikz}
\usepackage{hyperref}
\usepackage[capitalize]{cleveref}

%\usetikzlibrary{calc,math,graphs,positioning,quotes,arrows.meta} %graphdrawing

\tikzstyle{marketstyle}=[x=3cm,y=2cm,node font=\small]%
\tikzstyle{every edge quotes}=[auto=false,fill=white,opacity=.95,text opacity=1,pos=.25,inner sep=1pt]%
\tikzstyle{every node}+=[inner sep=2.5pt]%

% Package to generate and customize Algorithm as per ACM style
\SetArgSty{textrm}  % for algorithm2e
\SetAlFnt{\small}
\SetAlCapFnt{\small}
\SetAlCapNameFnt{\small}
\SetAlCapHSkip{0pt}
\IncMargin{-\parindent}

%%%%%%%%%%%%%%%%%%%%%%%%%%%%%%%%%%%%%%%%%%%%%%%%%%%%%%%%%%%%%%%%%%%%%%%%%%%%%%%%%%%%%%%%%%%%%%%%%
%% general notation
%%%%%%%%%%%%%%%%%%%%%%%%%%%%%%%%%%%%%%%%%%%%%%%%%%%%%%%%%%%%%%%%%%%%%%%%%%%%%%%%%%%%%%%%%%%%%%%%%

\DeclareMathAlphabet{\mathbfit}{OML}{cmm}{b}{it}

\newcommand{\suchthat}{\;\ifnum\currentgrouptype=16 \middle\fi|\;}

\newcommand{\myref}[2]{\hyperref[#2]{$#1$\ref*{#2}}}

\newcommand{\argmax}{\operatorname{arg\ max}}

 % separator dot
 % strict separator dot
 % wildcard dot
 % inner product dot

%\newcommand{\SpaceStyle}[1]{{\mathds #1}}
%\newcommand{\SpacesStyle}[1]{\pmb{\SpaceStyle{#1}}}

%%%%%%%%%%%%%%%%%%%%%%%%%%%%%%%%%%%%%%%%%%%%%%%%%%%%%%%%%%%%%%%%%%%%%%%%%%%%%%%%%%%%%%%%%%%%%%%%%
%% Matching Without Money
%%%%%%%%%%%%%%%%%%%%%%%%%%%%%%%%%%%%%%%%%%%%%%%%%%%%%%%%%%%%%%%%%%%%%%%%%%%%%%%%%%%%%%%%%%%%%%%%%

\crefname{property}{Property}{Properties}

\crefname{program}{Program}{Programs}

%\conferenceinfo{EC'16,}{July 24--28, 2016, Maastricht, The Netherlands.}
%\CopyrightYear{2016}
%\crdata{978-1-4503-3936-0/16/07}
%\copyrighttext{Copyright is held by the owner/author(s). Publication rights licensed to ACM.}

% Document starts
\begin{document}

% Page heads
%\markboth{G. Zhou et al.}{A Multifrequency MAC Specially Designed for WSN Applications}

% Title portion
\title{Simple and Efficient
Budget Feasible Mechanisms for Monotone Submodular Valuations}

\author{Pooya Jalaly \thanks{
Cornell University, 
Department of Computer Science, 
Ithaca NY 14850 USA. Email: \texttt{jalaly@cs.cornell.edu}. Work supported in part by NSF grant CCF-1563714, ONR grant N00014-08-1-0031, and a Google Research Grant. }
\and
\'{E}va Tardos \thanks{
Cornell University, 
Department of Computer Science, 
Ithaca NY 14850 USA.
Email: \texttt{eva@cs.cornell.edu}. Work supported in part by NSF grant CCF-1563714, ONR grant N00014-08-1-0031, and a Google Research Grant.}
}
%\author{%
%	Pooya Jalaly \thanks{Department of Computer Science, Cornell University, Gates Hall, Ithaca, NY 14853, USA, Email: \texttt{jalaly@cs.cornell.edu}. Work supported in part by NSF grant CCF-1563714, ONR grant N00014-08-1-0031, and a Google Research Grant.}
%	\and
%	\'{E}va Tardos\thanks{Department of Computer Science, Cornell University, Gates Hall, Ithaca, NY 14853, USA, Email: \texttt{eva@cs.cornell.edu}. Work supported in part by NSF grants CCF-1563714, and CCF-1422102,  ONR grant N00014-08-1-0031, and a Google Research Grant.}
%}
\maketitle
\begin{abstract}
We study the problem of a budget limited buyer who wants to buy a set of items, each from a different seller, to maximize her value. The budget feasible mechanism design problem aims to design a mechanism which incentivizes the sellers to truthfully report their cost, and maximizes the buyer's value while guaranteeing that the total payment does not exceed her budget. Such budget feasible mechanisms can model a buyer in a crowdsourcing market interested in recruiting a set of workers (sellers) to accomplish a task for her.

This budget feasible mechanism design problem was introduced by Singer in 2010.  There have been a number of improvements on the approximation guarantee of such mechanisms since then. We consider the general case where the buyer's valuation is a monotone submodular function.
We offer two general frameworks for simple mechanisms, and by combining these frameworks, we significantly improve on the best known results for this problem, while also simplifying the analysis. For example, we improve the approximation guarantee for the general monotone submodular case from 7.91 to 5; and for the case of large markets (where each individual item has negligible value) from 3 to 2.58. More generally, given an $r$ approximation algorithm for the optimization problem (ignoring incentives), our mechanism is a $r+1$ approximation mechanism for large markets, an improvement from $2r^2$. We also provide a similar parameterized mechanism without the large market assumption, where we achieve a $4r+1$ approximation guarantee.
\end{abstract}

%For the case of efficient mechanisms, we improve the randomized budget feasible mechanism of Singer \cite{singer2010} and Chen et al \cite{chen2011} and prove the worst case approximation guarantee of 5 (improving on previously best known result 7.9). In the large market setting where value of each item is small compared to the optimum, we give a simple deterministic mechanism with $1+\frac{e}{e-1}\simeq 2.58$ approximation guarantee (previously best known result was 3).

%For the case of exponential time mechanisms, we give a randomized budget feasible mechanism with approximation ratio of 4 (improving on previously best known result 7.91) and a deterministic budget feasible mechanism with approximation ratio of 4.56 (improving on previously best known result 8.34). 
\newpage

%\terms{Design, Algorithms, Performance}

%\keywords{Market Equilibria, Mechanism Design without Money, Fairness}

%\begin{bottomstuff}
%Pooya Jalaly was supported by a Google Research Grant.
%Eva Tardos was supported in part by NSF grant CCF-0910940, ONR grant %N00014-08-1-0031,  and a Google Research Grant.
%\end{bottomstuff}

\section{Introduction}
We study \textit{prior-free budget feasible mechanism design} problem, where a single buyer aims to buy a set of items, each from a different seller. Budget feasible mechanism design %is about designing truthful mechanisms that maximize
aims to maximize the value of the buyer, while keeping the total payments bellow the budget. We offer simple and universally truthful mechanisms for this problem, significantly improving previous bounds. This problem was introduced by \cite{singer2010}, and models the problem of crowdsourcing platforms, such as Amazon's Mechanical Turk, where a requester, with a set of tasks at hand, wishes to procure a set of workers to accomplish her tasks. Each worker has a private cost for his service. We offer universally truthful mechanisms with good approximation guarantee for this problem that incentivizes the workers to report their true cost.

We give two very simple parametrized mechanisms, assuming the buyer's valuation is a general monotone (non-decreasing) submodular function. Monotone submodular functions are widely used and general, submodularity capturing the diminishing returns property of adding items. Submodular value functions are the most general class of functions where the optimization problem (without considering incentives) can be solved approximately in polynomial time using a value oracle. We consider this problem in the general case, as well as the special case where each item individually does not have a significant value compared to the optimum. %The latter is called the \textit{large market assumption} and has been considered in this domain by Anari et al \citep{anari2014}.
%We offer significantly improved guarantees both in the general, as well as the case of large markets.

%Prior free budget feasible mechanism design has been introduced by Singer \citep{singer2010} for this problem. For monotone submodular valuations, he established a universally truthful, budget feasible and randomized mechanism with approximation guarantee of 117. The approximation guarantee was later improved. We will discuss related work below in details.
%See the table  \ref{Tab:Summary} for a quick summary of our results and the previous best bounds.

%Later, Chen et al \citep{chen2011} improved the mechanism and its analysis to achieve a 7.91 approximation guarantee. They also derandomized the mechanism to get a deterministic (exponential time) mechanism with a slightly weaker approximation guarantee of 8.34. Singla and Krause \citep{singla2013} gave a similar mechanism and analyzed it for large markets, they showed that with the large market assumption, the approximation guarantee of their mechanism is 4.75. Anari et al \citep{anari2014} further improved the result of \citep{singla2013} for large markets achieved a 3 approximation guarantee with a polynomial time mechanism and a 2 approximation guarantee with an exponential time mechanism. \citep{anari2014} also proposed a mechanism that given an $r$ approximation oracle for maximizing monotone submolular functions, has a $2r^2$ approximation guarantee. 

\paragraph{\textbf{Our Model}}
\label{Sec:Model}
We consider the problem of a single buyer with a budget $B$ facing a set of multiple sellers $A$. %We let $n$ denote the number of sellers and we assume $A=[n]$.
We assume that each seller $i \in A$ has a single indivisible item and has a private cost $c_i$, and the buyer has no prior knowledge of the private costs. The utility of a seller for selling her item and receiving payment $p_i$ is $p_i-c_i$. We only study universally truthful mechanisms, i.e. the mechanisms in which sellers truthfully report their costs, and do not have incentive to misreport. Since each seller $i \in A$ only has a single item, we interchangeably use $i$ to denote the seller or his item.  We assume that $v(S)$, the value of the buyer for a subset of items $S \subseteq A$, is a monotone (non-decreasing) submodular function.

%We denote %the value function of the buyer by $v:2^{n} \rightarrow \mathbb{R}$ and we denote by $v(S)$  the value of the buyer for a subset of items $S \subseteq A$. We assume $v(.)$ is a monotone (non-decreasing) submodular functions.

The \textit{budget feasibility constraint} enforces the total payments %of the buyer
to the sellers to never be higher than the budget. The goal of this paper is to design simple, universally truthful, budget feasible mechanisms that maximize the value of the buyer. We compare the performance of our mechanism with the true optimum without computational or incentive limitation, %the true costs of all the sellers are public knowledge and the only restriction is that
the total cost of the items selected needs to be bellow the budget. With this comparison in mind, incentive compatible mechanisms that do not run in polynomial time are also of interest.

%\textit{The large market assumption. } In Section \ref{Sec:LargeMarkets}, we consider large markets, assuming that the value of each agent is small compared to the optimum, i.e. $v(i) \ll opt(A)$ for all $i \in [n]$. For simplicity, we assume that $\theta = \max_{i \in [n]}\frac{v(i)}{opt(A)} \rightarrow 0$ in stating the approximation bounds of our mechanisms for large markets\footnote{By having a $\theta$-large market assumption instead, the approximation guarantees for our large market mechanisms increases by a factor of $(1-c\theta)^{-1}$, where $c \in (0,4)$ is a constant which is different for each mechanism. We omit stating the exact value of $c$ for each mechanism separately.}
%\etcomment{I moved the large market paragraph from here to the prelims, it was informally defined above, and its too early to define formally, I think}

 %The constant $c$ is clear from the approximation lemmas of our mechanisms, so
%An alternative way of defining large market assumption is based on costs instead of values, which says $c_i \ll B$. This assumption is not completely prior-free, as it requires a mild assumption over the private cost of the agents. Previous work of Anari et al \citep{anari2014} use the cost version of the assumption for their mechanism for additive valuation, but they use our value version for their results for monotone submodular valuations. In Bayesian setting with regular distributions (where cost of the agents comes from a known regular distribution), Balkanski and Hartline \citep{jason2016} also use the cost version for all their mechanisms. 

\paragraph{\textbf{Our Contribution}}
\label{Sec:Contribution}
%In this paper, we present simple universally truthful budget feasible mechanisms for the case of monotone submodular valuation functions.
We offer two classes of parameterized mechanisms. In Section \ref{Sec:Param1}, we study the class of parameterized \textit{threshold mechanisms} that decide on adding items based on a threshold of the marginal contribution of each item over its cost (bang per buck), using a parameter $\gamma$. In section \ref{Sec:Param2}, we consider another parameterized class, called the \textit{oracle mechanisms}, which adds items in decreasing order of bang per buck, till reaching an $\alpha$ fraction of the true optimum, without considering the budget. In section \ref{Sec:Param} we analyze these two parametrized mechanisms for general monotone submodular valuations. In section \ref{Sec:LargeMarkets} we combine the two mechanisms to get an improved result for large markets. See Table 1 for a summary of our results \etnedit{for the general problem}. \etnedit{In section \ref{Sec:Application} we focus on the application to a problem of markets with heterogenous tasks \cite{chen2011,goel2014}.}
%
%respect to the budget constraints within a factor of $r$ (but assumes the costs are public knowledge), the oracle mechanisms select the items with highest bang per buck while keeping the total value of the winning set lower than a fraction of the outcome of the oracle.

\begin{table}[h]
\label{Tab:Summary}
\begin{center}
  \begin{tabular} {|l|c|c|c|c|c|c|}
    \hline
     & Rand & Rand$^*$ & Det$^*$& Det, LM & Rand, Oracle& Det, Oracle, LM\\
     \hline
     Previous work & $7.91 $ & $7.91 $& $8.34 $ & $3 $ & $-$ & $2r^2 $\\
    \hline
    Our results &
     $5$ & $4$& $4.56$ & $2.58$ & $4r$ or $4r+1$ & $1+r$\\
    \hline
  \end{tabular}
\caption{\small The top numbers are the previously known best guarantees, $r\geq 1$ is the approximation ratio of the oracle used by the mechanism, $^*$ indicates that the mechanism has exponential running time. Rand and Det stand for randomized and  deterministic mechanisms, and LM indicates the large market assumption. The $4r$ guarantee requires an additional assumption for the oracle, without the assumption the bound is $4r+1$.}
\end{center}
\vspace{-0.3in}
\end{table}

\begin{itemize}
\item In section \ref{Sec:Param1} we consider threshold mechanism \textsc{Greedy-TM}, and \textsc{Random-TM}, that chooses randomly between the single item of highest value, and the output of the \textsc{Greedy-TM} mechanisms. Most of the mechanisms presented in \cite{singer2010}, \cite{chen2011}, \cite{singla2013}, and some of the mechanisms of \cite{anari2014} are special cases of this framework. We show that for monotone submodular valuations, with the right choice of the parameter $\gamma$, our randomized threshold mechanism is universally truthful, budget feasible and can achieve a 5 approximation of the optimum. This improves on the best previously known bound of 7.91 which is due to \cite{chen2011}. %We also show that with the large market assumption, the approximation ratio of the same mechanism becomes 3, matching the best known result for submodular valuations in this setting \citep{anari2014}.
    %\etcomment{I wonder about deleted the large market statement here, as its only matching the bound of  \citep{anari2014}.}
\item In section \ref{Sec:Param2}, we give another parameterized class of mechanisms, \textsc{Greedy-OM} and \textsc{Greedy-EOM}, called \textit{oracle mechanisms}, which add items in the bang per bunk order until an $\alpha$ fraction of the optimum value is obtained, for a parameter $\alpha$. The mechanism \textsc{Greedy-EOM} uses the true optimum value, while \textsc{Greedy-OM} uses a polynomial time approximation instead. We show that keeping the total value of the winning set at most a fraction of the optimum, guarantees that the mechanism is budget feasible. The large market oracle mechanisms of \cite{anari2014} are a special case of these mechanisms.
    %This class of mechanisms have not been considered for general markets, with the exception of \citep{chen2011} which uses an (exponential time) optimum oracle for derandomizing their threshold mechanism for monotone submodular valuations. %In the case of large markets, the 2-competetive (exponential time) mechanism and the $2r^2$-competetive oracle mechanism for monotone submodular valuations of \citep{anari2014} also use this framework.

    For the case when the algorithm has access to an oracle computing the true optimum value, we show that with the right choice of $\alpha$, our oracle mechanism \textsc{Random-EOM} is universally truthful, budget feasible and achieves a 4 approximation of the optimum for monotone submodular values, improving the bound of 7.91 of \cite{chen2011}. We also use a derandomization idea, which is similar to that of \cite{chen2011}, to give an (exponential time) oracle mechanism which achieves 4.56 approximation of the optimum, improving the 8.34 bound of \cite{chen2011}.

    The mechanisms \textsc{Random-OM} and \textsc{Greedy-OM}, run in polynomial time, using an $r$-approximation oracle as a subroutine instead of the optimum. We show that with the right choice of $\alpha$, \textsc{Greedy-OM} is universally truthful, budget feasible, and achieves a $4r+1$ approximation of the optimum (which improves to $4r$ when the oracle used is a greedy algorithm).
    %for large markets achieves a $2r$ approximation of the optimum, improving the best known result of $2r^2$ of \citep{anari2014}.
\item In section \ref{Sec:LargeMarkets}, we combine our two parameterized mechanisms by running both, and outputting the intersection of the two sets. Taking the intersection allows us to use larger values of the parameters $\gamma$ and $\alpha$ and keep the mechanism budget feasible. We show that our simple mechanism is universally truthful. %and provides an approximation guarantee that depends only on $\alpha$, $\gamma$ and $r$ (the parameters, and the approximation guarantee of the oracle used). We also provide a simple condition on the value of $\alpha$ and $\gamma$, which guarantees that the budget feasibility constraint is met.
    For the right choice of $\alpha$ and $\gamma$, our mechanism is deterministic, truthful, budget feasible and has an approximation guarantee of $1+r$, improving the bound $2r^2$ of \cite{anari2014} (where $r$ is the approximation guarantee of the oracle used). Using the greedy algorithm of \cite{greedy2004} (which was also analyzed in \cite{khuller1999} for linear valuations), the approximation guarantee of our mechanism is $1+\frac{e}{e-1}\simeq 2.58$.
\item In section \ref{Sec:Application}, we show how that our results for submodular valuations can be used for the problem of Crowdsourcing Markets with Heterogeneous Tasks  of \cite{goel2014}. Our mechanism of section \ref{Sec:LargeMarkets}, gives a deterministic truthful and budget feasible mechanism with an $1+\frac{e}{e-1}\approx 2.58$ approximation guarantee for large markets. The previous best result is the randomized truthful (in expectation) mechanism of \cite{goel2014}. We match their guarantee with a deterministic truthful mechanism.
    %It is interesting to compare our results for large markets to the approximation guarantee of the mechanism in Balkanski and Hartline \citep{jason2016} for large markets in the Bayesian setting, which assumes that the costs of agents come from known independent regular distributions. They use the benchmark of the outcome of the best truthful mechanism (instead of the optimum ignoring incentives) and achieve a $(\frac{e}{e-1})^2 \approx 2.5$ approximation compared to this benchmark. While this bound is $\approx 0.08$ better than our bound, we note that we use the true optimum as our benchmark, which can be much higher as explained in the related work section.
\end{itemize}

\paragraph{\textbf{Related Work}}
\label{Sec:Related}
Prior free budget feasible mechanism design for buying a set of items, each from a different seller, has been introduced by \cite{singer2010}. For monotone submodular valuations, which is the focus of our paper, %he established a universally truthful and budget feasible randomized mechanism with approximation guarantee of 117.
\cite{chen2011} improved the mechanism and its analysis to achieve a 7.91 approximation guarantee, and also derandomized the mechanism to get a deterministic (but exponential time) mechanism with %a slightly weaker
an approximation guarantee of 8.34.

\cite{singla2013} considered the problem for an application in community sensing %(fusing information from populations of private sensors), and gave a similar mechanism, showing that
and gave a mechanism with a 4.75 approximation guarantee
for large markets.
\cite{anari2014} improved the result of \cite{singla2013} achieving a 3 approximation guarantee for large markets with a polynomial time mechanism and a 2 approximation guarantee with an exponential time mechanism.
\cite{anari2014} also proposed a mechanism that given an $r$ approximation oracle for maximizing monotone submodular functions, has a $2r^2$ approximation guarantee. 

The problem of budget feasible mechanisms for value maximization has also been considered with other valuation functions. For example, for additive valuations the best known mechanism achieves an approximation bound of $2 +\sqrt{2}$ and 3 with a deterministic and randomized mechanisms respectively due to \cite{chen2011}, who also gave a $1+\sqrt{2}$ lower bound for approximation ratio of any truthful budget feasible mechanism in this setting. In large markets with additive valuations, \cite{anari2014} improved these results and gave a budget feasible mechanism with an approximation guarantee of $\frac{e}{e-1}$ with a matching lower bound.

\etnedit{\cite{singer2010} also considered the budget feasible mechanism design problem with matching constraint: the principal is required to select a matching of a bipartite graph, where each individual edge is an agent (with a private cost and a public value). \cite{chen2011} consider the knapsack problem with heterogeneous items, which is the special case of this problem where the bipartite graph is a set of disjoint stars. Their approximation bound mentioned above, of $2 +\sqrt{2}$ and 3 with a deterministic and randomized mechanisms, also extend to this case. \cite{goel2014} considered a variant of the problem motivated by Crowdsourcing Markets, where one side of the graph are agents with private costs, and the other side are tasks, each with a value for the principal, and an edge $(a,t)$ represents that agent $a$ can do task $t$. They give a randomized truthful (in expectation) mechanism with a $1+\frac{e}{e-1}$ approximation guarantee for this problem under the large market assumption. We'll discuss this special case in section \ref{Sec:Application}, and give a deterministic truthful mechanism matching their approximation guarantee.}

Prior free budget feasible mechanisms has also been studied for more general valuation functions. Monotone submodular valuations are the most general class of valuation functions for which a constant factor approximation guarantee with a polynomial time (with a value oracle), truthful and budget feasible mechanism is known. For subadditive valuations %, which is a more general class of functions than submodular functions,
\cite{dobzinski2011} introduced a %deterministic and a randomized
mechanism using a demand oracle (more powerful than the value oracle we use). The current best bound is an $O(\frac{\log n}{\log \log n})$ approximation guarantee due to \cite{bei2012}. \cite{bei2012} also gave randomized mechanism that achieves a constant (768) approximation guarantee for fractionally subadditive (XOS) valuations, also using a demand oracle.% functions.
%improved these bounds by giving a randomized mechanism that achieves a $O(\frac{\log n}{\log \log n})$ approximation guarantee.
%which are universally truthful, budget feasible mechanisms and achieve a $O(\log^2 n)$ and $O(\log^3 n)$ approximation guarantee, respectively. Bei et al \citep{bei2012} improved these bounds by giving a randomized mechanism that achieves a $O(\frac{\log n}{\log \log n})$ approximation guarantee.  They also gave randomized mechanism that achieves a constant (768) approximation guarantee for fractionally subadditive (XOS) valuation functions. It is important to note that the subadditive and fractionally subadditive results of \citep{dobzinski2011} and \citep{bei2012} require access to a demand oracle (given a price vector $p$, demand oracle returns $\argmax_{T\subseteq [n]}V(T)-\sum_{i\in T}p_i$), which may not be possible to implement in polynomial time with a value oracle \citep{singer2010}.

Some papers consider the Bayesian setting, where cost of each agent comes from known independent distributions. \cite{bei2012} gave a constant-competitive  mechanism for subadditive valuations (with a very large constant).
%Badanidiyuru et al \citep{bobby2012} gave a constant-competitive mechanism with random order arrivals and monotone submodular valuations.
\cite{jason2016} gave a $(\frac{e}{e-1})^2$-competitive posted pricing mechanism for monotone submodular valuations for large markets, using a cost version for defining the largeness of the market. The benchmark (optimum) used in \cite{jason2016} is the outcome of optimal Bayesian incentive compatible mechanism, while others (including us) have used the significantly higher, optimum  with respect to the budgeted pure optimization problem as their benchmark. It is interesting to compare our results for large markets to the approximation guarantee of $(\frac{e}{e-1})^2 \approx 2.5$ of the mechanism in \cite{jason2016}. While this bound is $\approx 0.08$ better than our bound, their benchmark, the optimal Bayesian incentive compatible mechanism, can be a factor of $\frac{e}{e-1}$ lower than our benchmark of the optimum ignoring incentives \cite{anari2014}. Even when the cost of sellers come from a uniform distributions, and the value of each item is 1, the ratio between the two benchmarks is $\sqrt{2}$.

\section{Preliminaries}
\label{Sec:Prelims}
We consider the problem of a single buyer with a budget $B$ facing a set of multiple sellers $A$, each selling a single item. We let $n$ denote the number of sellers and we assume $A=[n]$. We assume that the value $v(S)$ of the buyer for a set of items $S$, is a  (non-decreasing) submodular function, that is, satisfies $v(S) \leq v(T)$ for every $S \subseteq T$, and  $v(S) + v(T) \geq v(S \cup T) + v(S \cap T)$, for every set $S,T \subseteq A$. For every $i\in A$ and $S \subseteq A$, we define $m_i(S)=v(S \cup \{i\})-v(S)$, i.e. the marginal value of $i$ with respect to subset $S$. Note that $v(.)$ is monotone submodular if and only if for every $S,T \subseteq A$ we have:
$$v(T) \leq v(S) + \sum_{i \in T \setminus S}m_i(S).$$

\textit{The large market assumption. } In Section \ref{Sec:LargeMarkets}, we consider large markets, assuming that the value of each agent is small compared to the optimum, i.e. $v(i) \ll opt(A)$ for all $i \in [n]$. For simplicity, we state our approximation bounds for large markets in the limit\footnote{By having a $\theta$-large market assumption instead, the approximation guarantees for our large market mechanisms increases by a factor of $(1-c\theta)^{-1}$, where $c \in (0,4)$ is a constant which is different for each mechanism. We omit stating the exact value of $c$ for each mechanism separately.}, assuming $\theta = \max_{i \in [n]}\frac{v(i)}{opt(A)}\rightarrow 0$.
%we assume that $\theta = \max_{i \in [n]}\frac{v(i)}{opt(A)} \rightarrow 0$ in stating the approximation bounds of our mechanisms for large markets
%We define large markets based on value, the ratio of $\max_i v(i)/opt(A,B)$. An alternative way of defining large market assumption is based on costs instead of values, which says $c_i \ll B$. This assumption is not completely prior-free, as it requires a mild assumption over the private cost of the agents. Previous work of Anari et al \citep{anari2014} use the cost version of the assumption for their mechanism for additive valuation, but they use our value version for their results for monotone submodular valuations.

The mechanism design problem of selecting sellers maximizing the buyer's value subject to his budget constraint, is a single parameter mechanism design problem, in which each bidder (seller) has one private value (the cost of her item). We design truthful deterministic and individually rational mechanisms, as well as universally truthful and individually rational randomized mechanisms. A randomized mechanism is \emph{universally truthful} if it is a randomization among deterministic mechanisms, each of which are truthful. We use Myerson's characterization for truthful mechanisms (theorem \ref{THM:Mayerson}), stating that a mechanism is truthful and individually rational if and only if the choice of selecting each item is monotone in its declared cost, and winners are paid threshold payments that are above their declared cost.

\begin{theorem}\label{THM:Mayerson}\cite{myerson1981}
In single parameter domains, a normalized mechanism $M=(f,p)$ is truthful if and only if
\begin{itemize}
\item \textbf{$f$ is monotone}: $\forall i \in [n]$, if $c'_i \leq c_i$ then $i \in f(c_i,c_{-i})$ implies $i \in f(c'_i,c_{-i})$, or equivalently, $c'_i \notin f(c'_i,c_{-i})$ implies $c_i \notin f(c_i,c_{-i})$.
\item \textbf{Winners are paid threshold payments}: if $i \in [A]$ is a winner and receives payment $p_i$, then $p_i = \inf \{c_i:i\notin f(c_i,c_{-i})\}$.
\end{itemize}
\end{theorem}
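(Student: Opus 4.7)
The plan is to establish the two directions separately, exploiting the single-parameter structure in which each seller's private type is a single real cost $c_i$ and utility is quasilinear: $p_i(c,c_{-i}) - c_i$ if $i$ is selected, $0$ otherwise (the normalization assumption).

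For the forward direction, I would first derive monotonicity by contradiction. Suppose $c_i' < c_i$ with $i \in f(c_i,c_{-i})$ and $i \notin f(c_i',c_{-i})$. Truthfulness at true cost $c_i$, against the deviation to $c_i'$ (which would cause $i$ to lose and receive utility $0$), forces $p_i(c_i,c_{-i}) \geq c_i > c_i'$. But then an agent whose true cost is $c_i'$ would profit strictly by reporting $c_i$, since she would win with payment at least $c_i$, obtaining positive utility, whereas truthful reporting yields $0$; this contradicts truthfulness. For the threshold-payment characterization, I would apply truthfulness in both directions between two winning reports $c < c' < t_i$ to deduce that $p_i(c,c_{-i}) = p_i(c',c_{-i})$, so the winning payment is a constant $p^*$ on the entire winning region. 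Then, for any true cost $c > t_i$ (the agent loses, utility $0$), truthfulness forbids the profitable deviation to any winning report, giving $p^* \leq c$ and hence $p^* \leq t_i$. For any true cost $c < t_i$ (the agent wins with utility $p^* - c$), truthfulness against a losing deviation yields $p^* \geq c$, hence $p^* \geq t_i$. Combining, $p^* = t_i$.

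For the backward direction, assume $f$ is monotone and every winning $i$ is paid $t_i$. I would verify truthfulness by a case split on the true cost $c_i$ relative to $t_i$. If $c_i < t_i$, truthful reporting wins with utility $t_i - c_i \geq 0$; a deviation to any $c_i' > t_i$ loses and yields $0 \leq t_i - c_i$, while a deviation to any $c_i' < t_i$ is payoff-equivalent by monotonicity and the constancy of the threshold payment. If $c_i > t_i$, truthful reporting loses (utility $0$), while any deviation $c_i' < t_i$ wins with payment $t_i < c_i$, yielding strictly negative utility. Hence truthful reporting is weakly dominant in all cases.

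The main obstacle is pinning down $p^* = t_i$ tightly; this requires squeezing the constant winning payment from both sides by invoking truthfulness at costs just above and just below the threshold, so it is important that the deviation menu available to an agent is exactly the range of other agents' possible reports. The boundary behavior at $c_i = t_i$ is immaterial because the threshold is defined as a one-sided infimum, and the weak form of monotonicity stated in the theorem is enough to drive both the necessity and the sufficiency arguments.
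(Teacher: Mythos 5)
The paper does not prove Theorem~\ref{THM:Mayerson}; it cites it directly to Myerson (1981) as a background fact and uses it as a black box throughout. So there is no in-paper proof to compare against. Your argument is the standard, correct proof of Myerson's lemma for single-parameter domains, and its structure is sound: the forward direction derives monotonicity by exhibiting a profitable deviation for the lower type, then establishes constancy of the winning payment on the winning interval by pitting two winning types against each other, and finally pins that constant to $t_i=\inf\{c_i : i\notin f(c_i,c_{-i})\}$ by squeezing from above (types $c>t_i$ must not want to deviate into the winning region) and below (winning types $c<t_i$ must not prefer to lose). The backward direction's case split on $c_i$ versus $t_i$ is complete, including the observation that the knife-edge $c_i=t_i$ yields zero utility under either allocation outcome and so cannot break truthfulness.

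Two small points worth tightening if you were to write this up formally. First, in the forward direction you use that the winning set for agent $i$ is an interval $(-\infty,t_i)$ or $(-\infty,t_i]$ when you assert the winning payment is a single constant $p^*$; that interval structure is exactly what monotonicity (which you prove first) delivers, so the order of the two sub-arguments matters and should be stated. Second, the argument implicitly assumes the losing set $\{c_i : i\notin f(c_i,c_{-i})\}$ is nonempty so that $t_i<\infty$; in this paper's setting that is automatic because any bid above the budget $B$ loses, but in a general single-parameter statement one should either assume it or handle the degenerate case where $i$ always wins (then any payment scheme with a constant, weakly individually rational payment works and the ``threshold'' is $+\infty$). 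Neither issue is a gap in substance; your proof is correct and is the canonical one.
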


In order to show a mechanism is universally truthful and budget feasible, it suffices to show that the allocation is monotone and by using the threshold payments, the total payments are not more than the budget. Similar to \cite{dobzinski2011,chen2011,bei2012,singla2013}, we assume that the payments are threshold payments and only specify the allocation rule. At the end of each section, we briefly explain how the payment rule of the mechanisms in that section can be computed. \pjedit{In all our mechanisms if a seller bids a cost more than $B$, he will not be selected in the winning set, hence will have utility 0. This combined with the fact that all our mechanisms are truthful, implies \textit{individual rationality}, i.e. in all of our mechanisms utility of sellers are non-negative.}

\section{Parameterized Mechanisms for Submodular Valuations}
\label{Sec:Param}
In this section we present two simple parameterized mechanisms. We show that these parameterized mechanisms  provide good approximation guarantees, and are monotone and hence can be turned into truthful mechanisms with payments defined appropriately. We analyze the approximation guarantee of these mechanisms with and without the large market assumption and give conditions that make these mechanisms budget feasible.

Let $S_0=\emptyset$, and for each $i \in [n]$, recursively define $S_i=S_{i-1} \cup \{\argmax_{j \in A \setminus S_{i-1}}( \frac{m_{j}(S_{i-1})}{c_j})\}$, adding the item with maximum marginal value to cost ratio, to $S_{i-1}$. To simplify notation, we will assume without loss of generality that $\{i\}=S_i\setminus S_{i-1}$. All of our mechanisms sort the items in descending order of marginal bang for buck at the beginning \etedit{and consider items in this order}.

\subsection{The Threshold Mechanism}
Our threshold mechanism is a framework generalizing the mechanisms of
Singer \cite{singer2010} and Chen et al \cite{chen2011}.
We consider items in increasing cost-to-marginal value order, as defined above.
Our \emph{greedy threshold mechanism}, \textsc{Greedy-TM}, sets a threshold for the cost to marginal value ratio of the items, compared to the ratio of the budget to the total value of the set we have so far.  Using a parameter $\gamma$, the mechanism adds items while they are relatively cheap compared to the total so far.

The \textsc{Greedy-TM} mechanism works well for large markets where each individual item has small value compared to the optimum. In the general case, we will randomly choose between just selecting the item with maximum individual value and cost below the budget, or running \textsc{Greedy-TM}. We call the resulting randomized mechanism \textsc{Random-TM}$(\gamma, A, B)$.

\begin{minipage}[t]{7.5cm}
\IncMargin{2em}
\begin{algorithm}[H]
\label{Mech:Gamma}
\DontPrintSemicolon
\textsc{Greedy-TM}$(\gamma,A,B)$\;
\indent (Greedy Threshold Mechanism)\;
Let $k=1$\;
\While{$k\leq |A|$ and $\frac{c_{k}}{m_{k}(S_{k-1})} \leq \gamma \frac{B}{v(S_{k})}$}{
	$k=k+1$
	}
\Return $S_{k-1}$
\end{algorithm}
\IncMargin{-2em}
\end{minipage}
\begin{minipage}[t]{6.5cm}
\IncMargin{2em}
\begin{algorithm}[H]
\label{Mech:GammaGeneral}
\DontPrintSemicolon
\textsc{Random-TM}$(\gamma, A, B)$\;
\indent (Random Threshold Mechanism)\;
Let $A= \{i:c_i\leq B\}$\;
Let $i^*=argmax_{i\in[n]}(v(i))$\;
\textbf{With probability $\frac{\gamma+1}{\gamma+2}$ do}\;
\Indp{
 \Return \textsc{Greedy-TM}$(\gamma,A,B)$\;
 \textbf{halt}\;
 }
\Indm
\Return $i^*$
\end{algorithm}
\IncMargin{-2em}
\end{minipage}

The randomized mechanisms for submodular functions in Singer \cite{singer2010} is similar to \textsc{Random-TM} with parameter $\gamma=\frac{e-1}{12e-4}$ and the improved mechanism of Chen et al \cite{chen2011} is equivalent to \textsc{Random-TM} with $\gamma=0.5$.

Monotonicity of the mechanisms is easy to see: if someone is not chosen, %to be in the winning set in an iteration of the mechanism,
he cannot be selected by increasing his cost (decreasing his marginal bang per buck).
\begin{lemma}\label{Lem:GammaMon}
For every fixed $\gamma \in (0,1]$, the mechanism \textsc{Greedy-TM}$(\gamma,A,B)$ is monotone.
\end{lemma}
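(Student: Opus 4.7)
\begin{proofsk}
The plan is to fix a seller $i$ who is selected when bidding $c_i$, take any $c'_i < c_i$, and show that $i$ is still selected when bidding $c'_i$ (with everyone else's bids unchanged). First I would describe the original sorted order explicitly: let $o_1,o_2,\ldots,o_n$ be the greedy bang-per-buck ordering under $\vec{c}$, with $o_\ell = i$, so that the set considered just before $i$ is $B_i = \{o_1,\ldots,o_{\ell-1}\}$ and the hypothesis that $i$ is selected gives
\[
\frac{c_i}{m_i(B_i)} \;\leq\; \gamma\,\frac{B}{v(B_i\cup\{i\})}.
\]
The key structural observation is that lowering $c_i$ only affects $i$'s ratio; all other ratios $m_j(S)/c_j$ are unchanged, while $i$'s ratio strictly increases at every reference set $S$. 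So as long as the greedy procedure has not yet picked $i$ in the new run, its choice at each step agrees with the original run. Hence the new order begins $o_1,o_2,\ldots,o_{\ell'-1},i,\ldots$ for some $\ell'\leq\ell$, and the set $T := \{o_1,\ldots,o_{\ell'-1\}}$ considered just before $i$ in the new run satisfies $T\subseteq B_i$.

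Next I would check that the whole prefix $o_1,\ldots,o_{\ell'-1}$ actually passes the while-loop threshold in the new run. Since these items and the sets that precede them are identical to those in the original run (where the loop did reach step $\ell$), the thresholds of $o_1,\ldots,o_{\ell'-1}$ are checked against exactly the same inequalities as before and therefore still hold. So the new run reaches step $\ell'$ with the partial set $T$ and is about to consider $i$.

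Finally I would verify that $i$ itself passes the threshold at step $\ell'$, which is the substantive part. Using $T\subseteq B_i$, submodularity gives $m_i(T)\geq m_i(B_i)$, and monotonicity of $v$ gives $v(T\cup\{i\})\leq v(B_i\cup\{i\})$. Chaining these with the original threshold inequality,
\[
\frac{c'_i}{m_i(T)} \;\leq\; \frac{c_i}{m_i(B_i)} \;\leq\; \gamma\,\frac{B}{v(B_i\cup\{i\})} \;\leq\; \gamma\,\frac{B}{v(T\cup\{i\})},
\]
so $i$ is added to the selected set in the new run. This places $i$ in the returned set regardless of what subsequent items do, proving monotonicity. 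I do not expect any real obstacle: the only point that requires a small amount of care is justifying that the new order is merely the old order with $i$ promoted to position $\ell'$, and this follows immediately from the fact that only $i$'s ratio changes.
\end{proofsk}
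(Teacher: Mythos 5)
Your proof is correct and follows the same high-level idea the paper gestures at (that lowering $c_i$ only raises $i$'s bang-per-buck, promoting $i$ in the greedy order and only making the threshold test easier). The paper itself gives only a one-line intuition for this lemma; your careful verification that the prefix before $i$ is unchanged, that $T\subseteq B_i$, and that the threshold at $i$'s new earlier position then follows from submodularity ($m_i(T)\geq m_i(B_i)$) and monotonicity of $v$ is exactly the detail the paper leaves implicit.
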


We show that for every fixed $\gamma \in (0,1]$, \textsc{Random-TM}$(\gamma, A, B)$  achieves a $1+\frac{2}{\gamma}$ approximation of the optimum, improving the bound of \cite{chen2011}. The key difference is that we compare the output of \textsc{Greedy-TM}
directly with the true optimum, rather than a fractional greedy solution.
%going through the fractional optimal solution.
 Doing this not only improves the approximation factor, but also simplifies the analysis.

%First, we prove the following technical lemma.
\begin{lemma}\label{Lem:GammaApprox}
For every fixed $\gamma \in (0,1]$, if $S_{k-1}=$ \textsc{Greedy-TM}$(\gamma,A,B)$ then
$$(1+\frac{1}{\gamma})v(S_{k-1}) + \frac{1}{\gamma}v(i^*)\geq opt(A)$$
\end{lemma}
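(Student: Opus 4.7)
The plan is to compare the output $S_{k-1}$ of \textsc{Greedy-TM} with an optimal set $O \subseteq A$ satisfying $\sum_{j\in O} c_j \le B$ and $v(O) = opt(A)$, by combining submodularity with the two properties that characterize the greedy stopping point.

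First I would dispose of the trivial case: if the loop ran out of items, then $S_{k-1}=A$, and the inequality follows immediately from $v(S_{k-1}) = v(A) \ge opt(A)$. So assume the loop stopped because the threshold was violated at index $k$, which means
\[
\frac{m_k(S_{k-1})}{c_k} \;<\; \frac{v(S_k)}{\gamma B} \;=\; \frac{v(S_{k-1}) + m_k(S_{k-1})}{\gamma B}.
\]
By the greedy ordering, for every item $j \in A \setminus S_{k-1}$ we also have $\frac{m_j(S_{k-1})}{c_j} \le \frac{m_k(S_{k-1})}{c_k}$, since item $k$ was the next candidate chosen by argmax after $S_{k-1}$, and marginals of remaining items can only be smaller under a larger set by submodularity. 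Chaining these two inequalities,
\[
m_j(S_{k-1}) \;\le\; \frac{c_j\bigl(v(S_{k-1}) + m_k(S_{k-1})\bigr)}{\gamma B}
\quad\text{for every }j \in A\setminus S_{k-1}.
\]

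Next I would use monotone submodularity in the form $v(O) \le v(S_{k-1}) + \sum_{j \in O \setminus S_{k-1}} m_j(S_{k-1})$ and substitute the bound above, then pull out the common factor and apply $\sum_{j \in O} c_j \le B$:
\[
v(O) \;\le\; v(S_{k-1}) + \frac{v(S_{k-1}) + m_k(S_{k-1})}{\gamma B}\sum_{j\in O\setminus S_{k-1}} c_j \;\le\; \Bigl(1+\tfrac{1}{\gamma}\Bigr) v(S_{k-1}) + \tfrac{1}{\gamma}\, m_k(S_{k-1}).
\]

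Finally, since $v(\emptyset)=0$ and $v$ is submodular, $m_k(S_{k-1}) \le v(\{k\}) \le v(i^*)$. Plugging this in gives the claimed inequality. The main conceptual step — which is where the improved constant comes from compared to \cite{chen2011} — is comparing $v(S_{k-1})$ directly to $v(O)$ via one application of submodularity, rather than going through a fractional/continuous greedy surrogate; bookkeeping is routine once the two greedy-stopping inequalities are combined.
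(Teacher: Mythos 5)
Your proof is correct and follows essentially the same route as the paper's: bound $v(O)-v(S_{k-1})$ via submodularity, bound each marginal by the bang-per-buck of the rejected item $k$, apply the stopping threshold $c_k/m_k(S_{k-1}) > \gamma B/v(S_k)$ together with $c(O\setminus S_{k-1})\le B$, and finish with $m_k(S_{k-1})\le v(\{k\})\le v(i^*)$. The only differences are cosmetic (you add the trivial "ran out of items" case and an unnecessary submodularity remark when invoking the argmax property of $k$).
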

\begin{proof}
Let $S^*$ be the optimum. By monotonicity and submodularity of $v(.)$, we have
\begin{align*}
v(S^*)-v(S_{k-1}) &\leq v(S^* \cup S_{k-1}) - v(S_{k-1})
\leq \sum_{i \in S^*\setminus S_{k-1}}m_{i}(S_{k-1})=\sum_{i \in S^*\setminus S_{k-1}}c_i\frac{m_{i}(S_{k-1})}{c_i}
\end{align*}
By using the fact that $k \in \argmax_{i \in A\setminus S_{k-1}}\frac{m_{i}(S_{k-1})}{c_i}$, we get $\frac{m_{i}(S_{k-1})}{c_i} \le \frac{m_{k}(S_{k-1})}{c_k}$.
%
%we get
%\begin{align*}
%\sum_{i \in S^*\setminus S_{k-1}}c_i\frac{m_{i}(S_{k-1})}{c_i} \leq \sum_{i \in S^*\setminus S_{k-1}}c_i\frac{m_{k}(S_{k-1})}{c_{k}}\leq c(S^* \setminus S_{k-1}) \frac{m_{k}(S_{k-1})}{c_{k}}
%\end{align*}
Since $k$ is not in the winning set $\frac{c_k}{m_{k}(S_{k-1})}>\gamma \frac{B}{v(S_k)}$. \etedit{Using these} we get
\begin{align*}
\sum_{i \in S^*\setminus S_{k-1}}c_i\frac{m_{i}(S_{k-1})}{c_i} &\le c(S^* \setminus S_{k-1}) \frac{m_{k}(S_{k-1})}{c_{k}}< B (\frac{1}{\gamma}\frac{v(S_{k})}{B}) =\frac{1}{\gamma} v(S_{k})
\end{align*}
Finally, by definition of $v(S_k)$ and using submodularity, we have
\begin{align*}
\frac{1}{\gamma} v(S_{k})\leq \frac{1}{\gamma} (v(S_{k-1})+v(k))
\leq \frac{1}{\gamma} (v(S_{k-1})+v(i^*))
\end{align*}
By putting all the above together and rearranging the terms, we get the desired inequality.
\end{proof}

By using the above lemma for the performance of \textsc{Random-TM}, we can get the following approximation bound for \textsc{Random-TM}$(\gamma, A, B)$.

\begin{theorem}\label{THM:GammaApprox}
For every fixed $\gamma \in (0,1]$, \textsc{Random-TM}$(\gamma, A, B)$ is universally truthful, and  has approximation ratio of $1+\frac{2}{\gamma}$.
\end{theorem}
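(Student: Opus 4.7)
The theorem makes two assertions, universal truthfulness and an expected approximation ratio of $1+2/\gamma$, and I would handle them in turn.

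For universal truthfulness, I would view \textsc{Random-TM}$(\gamma, A, B)$ as a convex combination of two deterministic allocation rules: the greedy threshold rule, and the rule that picks the single item $i^* = \argmax_i v(i)$ among sellers with $c_i \le B$. Monotonicity of the first is given by \lemref{Lem:GammaMon}. For the second, raising a single seller's declared cost either leaves him at the top of the $v(i)$ ordering among the feasible set or causes him to drop out of $\{i : c_i \le B\}$, so his winning status is monotone in his bid (ties in $\argmax v(i)$ can be resolved by a fixed rule, e.g.\ lowest index, so that monotonicity is not disturbed by infinitesimal perturbations). By \thmref{THM:Mayerson}, each deterministic branch is truthful when paid its threshold payment, and randomizing between them yields a universally truthful mechanism.

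For the approximation ratio, the point is that the probability $\frac{\gamma+1}{\gamma+2}$ has been calibrated exactly to convert the deterministic bound of \lemref{Lem:GammaApprox} into an expected-value guarantee. Writing out the expected value of the output of \textsc{Random-TM},
\[
\E[v(\text{output})] \;=\; \frac{\gamma+1}{\gamma+2}\, v(S_{k-1}) \;+\; \frac{1}{\gamma+2}\, v(i^*) \;=\; \frac{\gamma}{\gamma+2}\Bigl[\bigl(1+\tfrac{1}{\gamma}\bigr) v(S_{k-1}) \;+\; \tfrac{1}{\gamma}\, v(i^*)\Bigr],
\]
and applying \lemref{Lem:GammaApprox} to the bracketed quantity immediately gives $\E[v(\text{output})] \ge \frac{\gamma}{\gamma+2}\, opt(A) = opt(A)/(1+2/\gamma)$, which is the claimed bound.

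The main thing to flag is that there is no real obstacle here: all of the analytical work has already been done in \lemref{Lem:GammaApprox}, and the proof of the theorem is essentially bookkeeping that exposes how the randomization constants were designed to match that lemma's coefficients. The only mildly delicate point is the monotonicity of the $i^*$-branch in the presence of ties, which is dealt with by fixing a deterministic tie-breaking rule as above.
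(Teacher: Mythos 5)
Your proof is correct and follows essentially the same route as the paper: invoke \lemref{Lem:GammaMon} for monotonicity, then show the randomization weights $\frac{\gamma+1}{\gamma+2}$ and $\frac{1}{\gamma+2}$ are exactly calibrated to turn the bound of \lemref{Lem:GammaApprox} into the stated approximation factor. The only difference is that you explicitly address monotonicity of the $i^*$-branch (which the paper's one-line truthfulness argument leaves implicit), a small but welcome bit of extra care.
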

\begin{proof}
By Lemma \ref{Lem:GammaMon} the mechanism is monotone and hence universally truthful.

To prove the approximation ratio, let $S$ be the outcome of the mechanism. The mechanism chooses $S_{k-1}=$ \textsc{Greedy-TM}$(\gamma,A,B)$ with probability $\frac{\gamma+1}{\gamma+2}$ and $i^*$ with probability $1- \frac{\gamma+1}{\gamma+2}= \frac{1}{\gamma+2}$. So we have
\begin{align*}
&E[v(S)] = \frac{\gamma+1}{\gamma+2}v(S_{k-1}) + \frac{1}{\gamma+2}v(i^*)\\
\Rightarrow & (1+\frac{2}{\gamma})E[v(S)] = \frac{\gamma+1}{\gamma}v(S_{k-1}) + \frac{1}{\gamma}v(i^*)
\end{align*}
So by using lemma \ref{Lem:GammaApprox} we have
$$(1+\frac{2}{\gamma})E[v(S)]\geq opt(A)$$
\end{proof}

The mechanisms \textsc{Greedy-TM}$(\gamma,A,B)$ and \textsc{Random-TM}$(\gamma, A, B)$ are not necessarily budget feasible for an arbitrary choice of $\gamma$. However, \cite{chen2011} shows that \textsc{Random-TM}$(0.5, A, B)$ (which they call \textsc{Random-SM}) is budget feasible. We include a simplified proof in Section \ref{Sec:Appendix} for completeness.
%Here we ignore their approximation and truthfulness results, and only use their budget feasibility. We add the proof (with a simplified analysis) for completeness.

\begin{theorem}[\cite{chen2011}]\label{THM:Chen}
\textsc{Random-TM}$(0.5,A,B)$ and \textsc{Greedy-TM}$(0.5,A,B)$ are budget feasible.
\end{theorem}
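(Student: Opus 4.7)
The plan is to establish $\sum_{i \in S_{k-1}} p_i \leq B$ for \textsc{Greedy-TM}$(0.5,A,B)$ and then to handle \textsc{Random-TM} by reducing to the budget feasibility of each of its two deterministic branches. The central intermediate claim is the per-winner bound
$$p_i \;\leq\; \gamma B \cdot \frac{m_i\bigl(S_{k-1} \setminus \{i\}\bigr)}{v(S_{k-1})}, \qquad \gamma = \tfrac{1}{2}, \quad \text{for every } i \in S_{k-1}.\qquad(\star)$$

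To prove $(\star)$, fix $i \in S_{k-1}$. By Lemma \ref{Lem:GammaMon}, the selection of $i$ is monotone in $c_i$, so $p_i$ is the critical cost at which $i$ stops being selected. As $c_i$ is raised to $p_i$, $i$'s position in the greedy sequence shifts to some $\tau \geq j$ (where $j$ was $i$'s original position), with predecessors $T_{\tau-1}$; since no item earlier than $\tau$ depends on $c_i$, $T_{\tau-1}$ equals the first $\tau-1$ items of the greedy run on $A \setminus \{i\}$. Insertion of $i$ at position $\tau$ requires the greedy acceptance condition, which yields
$$p_i \;\leq\; \gamma B \cdot \frac{m_i(T_{\tau-1})}{v(T_{\tau-1}\cup\{i\})}.$$
Observing that $h(S):=m_i(S)/v(S\cup\{i\})=1-v(S)/v(S\cup\{i\})$ is non-increasing in $S$ (an immediate consequence of submodularity and monotonicity of $v$, via the identity $m_i(S)v(T\cup\{i\})-m_i(T)v(S\cup\{i\})=m_i(S)v(T)-m_i(T)v(S)$), the bound $(\star)$ follows immediately in the case $T_{\tau-1}\supseteq S_{k-1}\setminus\{i\}$. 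The main obstacle I anticipate is the remaining case, where the $i$-excluded greedy's trajectory diverges from $S_{k-1}\setminus\{i\}$ (so the two sets are incomparable); there one must compare the two trajectories more carefully, using the greedy acceptance rules together with submodularity to bound the relevant ratio $m_i(T_{\tau-1})/v(T_{\tau-1}\cup\{i\})$ against $m_i(S_{k-1}\setminus\{i\})/v(S_{k-1})$.

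Given $(\star)$, summing over $i \in S_{k-1}$ produces
$$\sum_{i \in S_{k-1}} p_i \;\leq\; \frac{\gamma B}{v(S_{k-1})} \sum_{i \in S_{k-1}} m_i\bigl(S_{k-1}\setminus\{i\}\bigr).$$
Ordering $S_{k-1}$ as $i_1,\ldots,i_{k-1}$ in the greedy order and using $S_{j-1}\subseteq S_{k-1}\setminus\{i_j\}$, submodularity gives $m_{i_j}(S_{k-1}\setminus\{i_j\}) \leq m_{i_j}(S_{j-1})$, and $\sum_j m_{i_j}(S_{j-1}) = v(S_{k-1})$ telescopes. Hence $\sum_{i \in S_{k-1}} p_i \leq \gamma B = B/2 \leq B$, establishing budget feasibility of \textsc{Greedy-TM}$(0.5,A,B)$.

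Finally, for \textsc{Random-TM}$(0.5,A,B)$ it suffices to check budget feasibility of each deterministic branch. The \textsc{Greedy-TM}$(0.5,A,B)$ branch is handled above; in the $i^*$ branch, only a single seller is selected whose cost was pre-filtered to satisfy $c \leq B$, so the threshold payment is at most $B$.
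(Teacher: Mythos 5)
Your overall structure is sound and your handling of \textsc{Random-TM} is correct, but the central per-winner bound $(\star)$ is not established, and the gap you flag is genuine (and it is the generic case, not an edge case). For general submodular $v$, once $c_i$ is raised past its original value, the greedy run on $A \setminus \{i\}$ diverges from the original run at step $i$: the items after step $i-1$ may be reordered, items outside $S_{k-1}$ may appear, and items of $S_{k-1}$ may be deferred. So $T_{\tau-1}$ and $S_{k-1}\setminus\{i\}$ share the prefix $S_{i-1}$ but are otherwise incomparable sets, and the monotonicity of $h$ (which you proved correctly) gives you nothing. Worse, $\tau$ itself may stop short of where it would need to be: the critical position $\tau^*$ is capped by where the no-$i$ greedy halts, and there is no a priori reason for $T_{\tau^*-1}$ to dominate $S_{k-1}\setminus\{i\}$. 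Your observation that $h(T_{\tau-1}) \le h(S_{i-1})$ does hold (since $S_{i-1}\subseteq T_{\tau-1}$ always), but summing that weaker bound gives $\sum_i m_i(S_{i-1})/v(S_i)$, which does not telescope and can exceed $1$.

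The paper avoids this comparison entirely by arguing by contradiction rather than by exhibiting the critical set. Suppose $i$ is still selected after bidding $b_i > m_i(S_{i-1})\,B/v(S_{k-1})$, landing at position $j$ with winning prefix $S'_j$. The paper then derives two incompatible inequalities about $v(S'_j)$. First, from the acceptance condition at $j$ and $m_i(S'_{j-1}) \le m_i(S_{i-1})$ together with the deviation assumption, one gets $v(S'_j) < v(S_{k-1})/2$. Second — and this is the step your approach has no analogue of — the sorted bang-per-buck order implies that every selected $z$ already satisfies $c_z/m_z(S_{z-1}) \le 0.5\,B/v(S_{k-1})$ (because the last accepted item $k-1$ does, and the ratios are nondecreasing), hence $c(S_{k-1}) \le B/2$; a standard submodularity/telescoping estimate then bounds $v(S_{k-1})-v(S'_j)$ by $\frac{m_i(S'_{j-1})}{b_i}c(S_{k-1}) < v(S_{k-1})/2$, i.e.\ $v(S'_j) > v(S_{k-1})/2$. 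Contradiction. Note that the paper only proves the weaker per-winner bound $p_i \le m_i(S_{i-1})B/v(S_{k-1})$ (summing to $B$, not $\gamma B$); your $(\star)$, with the extra $\gamma$ and with $m_i(S_{k-1}\setminus\{i\})$ in place of $m_i(S_{i-1})$, asserts something strictly stronger, so even if true it would need an argument at least as delicate as the paper's. As it stands, your proof of \textsc{Greedy-TM} budget feasibility is incomplete; the aggregation step (telescoping $\sum_j m_{i_j}(S_{j-1}) = v(S_{k-1})$ and using $m_{i_j}(S_{k-1}\setminus\{i_j\}) \le m_{i_j}(S_{j-1})$) and the reduction of \textsc{Random-TM} to its two branches are both correct.
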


Combining Theorem \ref{THM:Chen} and Theorem \ref{THM:GammaApprox} for the general case, and  using Lemma \ref{Lem:GammaApprox} directly, instead of Theorem \ref{THM:GammaApprox} for the case of large market, where $v(i^*)\ll opt(A)$, we get the following theorem. The bound for large markets is
%Using Theorem \ref{THM:Chen} and Theorem \ref{THM:GammaApprox} we get that \textsc{Random-TM}$(\gamma, A, B)$ is universally truthful, budget-balanced 5-approximation. For the large market case, $v(i^*)\ll opt(A)$, using Lemma \ref{Lem:GammaApprox} directly, instead of Theorem \ref{THM:GammaApprox} , of we get that in the large market case, \textsc{Greedy-TM}$(\gamma, A, B)$ is universally truthful, budget-balanced $\approx 3$-approximation,
matching the best approximation guarantee of Anari et al \cite{anari2014} for submodular functions with computational constraint. In section \ref{Sec:LargeMarkets} we improve this bound.

\begin{corollary}
\textsc{Random-TM}$(0.5, A, B)$ is truthful, budget feasible and has approximation ratio of $5$. For the case of  large market case, where $v(i^*)\ll opt(A)$, \textsc{Greedy-TM}$(\gamma, A, B)$ is is truthful, budget feasible and has approximation ratio of $3$.
\end{corollary}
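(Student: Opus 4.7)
The corollary is essentially a direct bookkeeping combination of the three results already established in the section, so the plan is to assemble them rather than prove anything new.

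For the general (first) statement, I would simply instantiate Theorem~\ref{THM:GammaApprox} at $\gamma=0.5$, which immediately gives universal truthfulness and approximation ratio $1 + 2/0.5 = 5$, and then invoke Theorem~\ref{THM:Chen} for budget feasibility of \textsc{Random-TM}$(0.5,A,B)$. Since Theorem~\ref{THM:Chen} is stated for exactly this parameter choice, no extra argument is needed.

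For the large-market (second) statement, I would instead use Lemma~\ref{Lem:GammaApprox} directly, because under the large-market assumption the term $v(i^*)$ becomes negligible and applying Theorem~\ref{THM:GammaApprox} would be wasteful (it loses a factor of $2$ compared to what Lemma~\ref{Lem:GammaApprox} gives). Plugging $\gamma=0.5$ into the lemma yields
\[
3\, v(S_{k-1}) + 2\, v(i^*) \;\ge\; opt(A),
\]
where $S_{k-1}=\textsc{Greedy-TM}(0.5,A,B)$. Truthfulness of \textsc{Greedy-TM}$(0.5,A,B)$ follows from Lemma~\ref{Lem:GammaMon}, and budget feasibility is the second assertion of Theorem~\ref{THM:Chen}. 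Taking the limit $\theta = \max_i v(i)/opt(A) \to 0$ (as described in the large-market assumption in Section~\ref{Sec:Prelims}), the term $2v(i^*)$ vanishes relative to $opt(A)$, leaving $3\, v(S_{k-1}) \ge opt(A)$ in the limit, i.e.\ approximation ratio $3$.

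Since every piece is already in hand, there is no real obstacle; the only thing to be careful about is that I use the correct mechanism in each case (the randomized \textsc{Random-TM} for the general bound, and the deterministic \textsc{Greedy-TM} for the large-market bound), and that I state the large-market result as a limit, consistent with the convention adopted in Section~\ref{Sec:Prelims}. No new lemma, no new calculation, and no subtle probabilistic argument is required beyond what has already been proved.
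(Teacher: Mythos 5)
Your proposal is correct and follows essentially the same route the paper takes: the paper explicitly says to combine Theorem~\ref{THM:Chen} with Theorem~\ref{THM:GammaApprox} for the general bound and to use Lemma~\ref{Lem:GammaApprox} directly (so that the $v(i^*)$ term can be dropped in the large-market limit) for the bound of~$3$. The only thing you add is the explicit observation that the corollary's ``\textsc{Greedy-TM}$(\gamma,A,B)$'' must be instantiated at $\gamma=0.5$ for Theorem~\ref{THM:Chen}'s budget-feasibility guarantee to apply, which is a correct and helpful clarification of the statement.
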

%\begin{proof}
%The proof follows directly from Theorem \ref{THM:Chen} and Theorem \ref{THM:GammaApprox} or Lemma \ref{Lem:GammaApprox} respectively.
%\end{proof}

We note that our approximation analysis for \textsc{Random-TM}$(0.5, A, B)$ is tight. To see this consider the following example (with additive valuation): assume we have 5 items numbered from 1 to 5 with budget 4. Let $v_1=1$ and $c_1=0$, and let $v_i=1-\epsilon$ and $c_i=1$ for $2 \leq i \leq 5$. The mechanism chooses item 1, however since $1> \frac{4}{2} \frac{1-\epsilon}{2-\epsilon}$, none of the other items is in the winning set of \textsc{Greedy-TM}$(0.5,A,4)$. So the value of the \textsc{Greedy-TM}$(0.5,A,4)$ as well as the value of \textsc{Random-TM}$(0.5,A,4)$ is 1. However, optimum can select all of the items and get the value $5-4\epsilon$. Since $\epsilon$ can be arbitrarily small, \textsc{Random-TM}$(0.5,A,B)$ is at most a 5-approximation.

The threshold payment of each agent $i$ in the winning set for the threshold mechanisms in this section can be computed by increasing $i$'s cost until he reaches the threshold that makes him not eligible to be in the winning set, while keeping the cost of other agents fixed. In order to compute this number in polynomial time, it is enough to fix other agents' costs and see where in the sorted list of marginal bang-per-bucks this agent can be appear such that the inequality of \textsc{Greedy-TM}$(\gamma,A,B)$ still holds for her. The more detailed characterization of these threshold payments is similar to that of \cite{singer2010}.

%cannot get better than 5 approximation of the optimum for this instance.

%In section \ref{Sec:LargeMarkets}, we show that the parameterizations of \textsc{Greedy-TM}$(\gamma,A,B)$ as well as our analysis for lemma \ref{Lem:GammaApprox} can be used for designing a mechanism for the case of large markets.

\label{Sec:Param1}
\subsection{The Oracle Mechanism}
Here, we provide a different class of parameterized mechanisms. This class of mechanisms requires an oracle $Oracle(A,B)$, which considers the optimization problem of maximizing the value subject to the total cost not exceeding the budget $B$, and returns a value which is close to optimum. Let $opt(A,B)$ denote the optimum value of this optimization problem. We assume that $opt(A,B) \geq Oracle(A,B)$. The oracle is an $r$ approximation, if we also have $r \cdot Oracle(A,B) \geq opt(A,B)$. For instance the greedy algorithm of Sviridenko \cite{greedy2004} can be used as an oracle with  $r=\frac{e}{e-1}\approx 1.58$.

\paragraph{Exponential time mechanism.}
We start with a simple \emph{exponential time oracle mechanism}, \textsc{Greedy-EOM}, using the optimal solution value $opt(A,B)$ as an oracle. The optimum value is simpler to use, as it is monotone in the cost of the agents\footnote{For large markets, Anari et al \cite{anari2014} use an exponential time mechanism which is similar to \textsc{Greedy-EOM}$(0.5,A,B)$}. Later, we show how to use a polynomial time approximation oracle instead of $opt(A,B)$, with a small sacrifice in the approximation ratio while keeping the mechanism truthful and budget feasible. Our mechanisms in this section also sort the items in decreasing order of bang-per-buck, as explained in the beginning of the section.

\hspace{-0.3in}
\begin{minipage}[t]{4.8cm}
\IncMargin{2em}
\begin{algorithm}[H]
\label{Mech:EAlphaGreedy}
\DontPrintSemicolon
\textsc{Greedy-EOM}$(\alpha,A,B)$\;
(exp. time oracle mechanism)\;
Let $v^* = opt(A,B)$\;
Let $k=1$\;
\While{$v(S_k) \leq \alpha v^*$}{
	$k=k+1$
	}
\Return $S_{k-1}$
\end{algorithm}
\IncMargin{-2em}
\end{minipage}
\begin{minipage}[t]{5.5cm}
\IncMargin{2em}
\begin{algorithm}[H]
\label{Mech:EAlphaGeneral}
\DontPrintSemicolon
\textsc{Random-EOM}$(\alpha, A, B)$\;
(exp. time oracle mechanism)\;
Let $A= \{i:c_i\leq B\}$\;
Let $i^*=argmax_{i\in[n]}(v(i))$\;
\textbf{With probability $\frac{1}{2}$ do}\;
\Indp{
 \Return \textsc{Greedy-EOM}$(\alpha,A,B)$\;
 \textbf{halt}\;
 }
\Indm
\Return $i^*$
\end{algorithm}
\IncMargin{-2em}
\end{minipage}
\begin{minipage}[t]{5.7cm}
\IncMargin{2em}
\begin{algorithm}[H]
\label{EAlphaGeneral}
\DontPrintSemicolon
\textsc{Deterministic-EOM}$(\alpha, A, B)$\;
(det. exp. time oracle mechanism)\;
Let $A= \{i:c_i\leq B\}$\;
Let $i^*=argmax_{i\in[n]}(v(i))$\;
Let $v=opt(A\setminus \{i^*\},B)$\;
\uIf{$v(i^*)\geq \frac{\sqrt{17}-3}{4} v$}{
\Return $i^*$
}\Else{
\Return \textsc{Greedy-EOM}$(0.5,A,B)$\;
}
\end{algorithm}
\IncMargin{-2em}
\end{minipage}
%Fist we show that for a fixed $\alpha \in (0,1]$, \textsc{Random-EOM}$(\alpha, A, B)$ is monotone and has approximation factor of $\frac{2}{\alpha}$.

\begin{lemma}\label{Lem:EAlphaApprox}
For every fixed $\alpha \in (0,1]$, \textsc{Greedy-EOM}$(\alpha,A,B)$ is monotone, and if $S_{k-1}=$ \textsc{Greedy-EOM}$(\alpha,A,B)$ then $\frac{1}{\alpha}v(S_{k-1}) + \frac{1}{\alpha}v(i^*)\geq opt(A,B)$.
\end{lemma}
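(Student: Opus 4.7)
My plan is to handle the two assertions of the lemma separately. The approximation inequality should be a one-line consequence of the termination condition of the while loop, combined with submodularity; monotonicity will require tracking how both the greedy ordering and the oracle value respond when an item's declared cost decreases.

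For the inequality, I would observe that the loop halts at the smallest $k$ with $v(S_k) > \alpha v^*$, where $v^* = opt(A,B)$. Submodularity (diminishing returns from the empty set) gives $m_k(S_{k-1}) \le m_k(\emptyset) = v(k) \le v(i^*)$, so
\[
\alpha v^* < v(S_k) = v(S_{k-1}) + m_k(S_{k-1}) \le v(S_{k-1}) + v(i^*),
\]
and dividing by $\alpha$ gives the claimed bound.

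For monotonicity, I would fix a winner $i \in S_{k-1}$, imagine his cost dropping to $c'_i < c_i$ with every other cost unchanged, and argue that $i$ is still in the output. Write $u^*$ for the new oracle value $opt(A,B)$ under the new costs, and $T_t$ for the first $t$ items chosen by the new greedy. Two things change: first, $u^* \ge v^*$, because every subset whose total cost is at most $B$ under the old costs is still within budget under the new costs; second, only item $i$'s bang-per-buck ratio shifts, and it weakly rises. The core of the argument is then an induction on the step $t$ showing that, as long as the new greedy has not yet picked $i$, the item it selects at step $t$ is exactly the same as in the original greedy. This holds because the $\argmax$ restricted to $h \ne i$ is identical in the two runs, so the only way the new winner can deviate from the old one is for $i$ himself to overtake that maximizer. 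Consequently, writing $j$ and $j'$ for the positions of $i$ in the original and new greedy orderings, $j' \le j$ and $T_{j'} \subseteq S_j$.

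To finish monotonicity, $i \in S_{k-1}$ gives $j \le k-1$ and $S_j \subseteq S_{k-1}$, so
\[
v(T_{j'}) \le v(S_j) \le v(S_{k-1}) \le \alpha v^* \le \alpha u^*
\]
by monotonicity of $v$, the loop invariant of the original run, and $v^* \le u^*$. Hence in the new run the loop has not yet terminated when $i$ is added, and $i$ is returned. The only delicate step is the ordering induction; the rest follows from the termination condition together with the monotonicity of both $v$ and of $opt(A,\cdot)$ in the cost vector.
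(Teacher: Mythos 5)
Your proof is correct and follows essentially the same approach as the paper: the approximation bound is the identical one-liner from the termination condition plus $m_k(S_{k-1}) \le v(k) \le v(i^*)$, and for monotonicity you establish the same two facts the paper invokes (the oracle value $opt(A,B)$ is monotone in the declared cost, and the item's bang-per-buck position moves favorably), just in the equivalent winner-stays-in direction with the ordering-stability induction spelled out rather than asserted.
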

\begin{proof}
We first argue that the mechanism is monotone. Assume $i \in A$ and $i \notin S_{k-1}$. If $i$ increase his cost, it cannot increase the value of $v^*=opt(A,B)$. Furthermore, %at each step of the mechanism
by increasing $i$'s cost, her marginal bang per buck decreases, which cannot help him get selected, %so since he was not selected before, he is still not selected after he increases his cost. Therefore, if a seller is not in the winning set increase his cost does not add him to the winning set,
so the mechanism is monotone.

To see the approximation bound simply note that by the definition of the mechanism we have $v(S_{k})>\alpha v^*$. So $v(S_{k-1})+v(i^*) > v(S_{k-1})+v(k) > \alpha v^*$. %By dividing both sides by $\alpha$ we get the desired inequality.
\end{proof}

By using lemma \ref{Lem:EAlphaApprox}, it is easy to prove the following theorem.
%The detailed proof can be found in section \ref{Sec:Appendix}.
\begin{theorem}\label{THM:EAlphaApprox}
For every fixed $\alpha \in (0,1]$, \textsc{Random-EOM}$(\alpha, A, B)$ is universally truthful, and if $S=$ \textsc{Random-EOM}$(\alpha, A, B)$ then $\frac{2}{\alpha}E[v(S)]\geq opt(A)$.
\end{theorem}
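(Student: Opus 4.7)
The plan is to reduce the theorem to \lemref{Lem:EAlphaApprox} by exploiting the fact that \textsc{Random-EOM} is just a coin flip between two deterministic sub-mechanisms. Universal truthfulness should be immediate from monotonicity, and the approximation bound should fall out of taking an expectation.

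First I would establish universal truthfulness. By definition, \textsc{Random-EOM}$(\alpha,A,B)$ first restricts to $A'=\{i\in A : c_i\leq B\}$ and then with probability $\tfrac12$ runs \textsc{Greedy-EOM}$(\alpha,A',B)$ and with probability $\tfrac12$ returns the single-item allocation $\{i^*\}$ where $i^*=\argmax_{i\in A'} v(i)$. Restricting the set of candidates to $A'$ is monotone: raising one's cost above $B$ removes one from the market but never adds anyone. The single-item rule is monotone since $i^*$ is chosen based on values, not costs, and $i^*$ is removed from consideration precisely when $c_{i^*}$ rises above $B$. Finally, \textsc{Greedy-EOM}$(\alpha,A',B)$ is monotone by \lemref{Lem:EAlphaApprox}. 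Thus \textsc{Random-EOM} is a distribution over monotone deterministic allocation rules, each paired with threshold payments, and so is universally truthful.

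For the approximation bound, let $S_{k-1} = $ \textsc{Greedy-EOM}$(\alpha,A',B)$. By construction,
\begin{equation*}
E[v(S)] \;=\; \tfrac{1}{2} v(S_{k-1}) + \tfrac{1}{2} v(i^*).
\end{equation*}
Multiplying through by $2/\alpha$ gives
\begin{equation*}
\tfrac{2}{\alpha} E[v(S)] \;=\; \tfrac{1}{\alpha} v(S_{k-1}) + \tfrac{1}{\alpha} v(i^*) \;\geq\; opt(A',B),
\end{equation*}
where the inequality is exactly the second claim of \lemref{Lem:EAlphaApprox}. To conclude $\tfrac{2}{\alpha} E[v(S)] \geq opt(A)$, I would note that any item $i\in A\setminus A'$ has $c_i>B$ and thus cannot appear in any feasible solution to the budgeted optimization problem; hence $opt(A)=opt(A',B)$, and the theorem follows.

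I do not expect a real obstacle here: the two ingredients — monotonicity of \textsc{Greedy-EOM} and the bound $\tfrac{1}{\alpha}v(S_{k-1})+\tfrac{1}{\alpha}v(i^*)\geq opt(A,B)$ — are already given by \lemref{Lem:EAlphaApprox}. The only subtlety worth double-checking is that the benchmark in the theorem statement is $opt(A)$ while the lemma delivers $opt(A',B)$; that identification is handled by the above filtering observation.
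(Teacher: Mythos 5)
Your proof is correct and follows essentially the same route as the paper's: universal truthfulness from monotonicity of each deterministic branch (citing \lemref{Lem:EAlphaApprox}), and the approximation bound by taking the $\tfrac12/\tfrac12$ expectation, multiplying by $2/\alpha$, and invoking the lemma's inequality. The extra remarks you add about the $A$ versus $A'=\{i:c_i\le B\}$ filtering and $opt(A)=opt(A',B)$ are harmless elaborations of details the paper leaves implicit, not a different argument.
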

\begin{proof}
By using lemma \ref{Lem:EAlphaApprox} and similar argument to proof of truthfulness in theorem \ref{THM:GammaApprox}, it is easy to see that the mechanism is universally truthful.

By definition of \textsc{Random-EOM}$(\alpha, A, B)$ we have
\begin{align*}
&E[v(S)] = \frac{1}{2}v(S) + \frac{1}{2}v(i^*)\\
\Rightarrow &\frac{2}{\alpha} E[v(s)] = \frac{1}{\alpha}v(S) + \frac{1}{\alpha} v(i^*)
\end{align*}
By using lemma \ref{Lem:EAlphaApprox} the proof is complete.
\end{proof}

Now we show that for the choice of $\alpha = 0.5$, \textsc{Greedy-EOM}$(0.5,A,B)$ is budget feasible, so \textsc{Random-EOM}$(0.5, A, B)$ is universally truthful, budget feasible, and a 4 approximation to the optimum. \etnedit{A bit more complex analog of this lemma for the mechanism using an approximation algorithm in place of the true optimum will be lemma \ref{Lem:AlphaBudget}.}
%The proof of lemma \ref{Lem:AlphaBudget} is the analog of the proof this lemma for the mechanism using an approximation algorithm in place of the true optimum.

\begin{lemma}\label{Lem:EAlphaBudget}
By using threshold payments, \textsc{Greedy-EOM}$(0.5,A,B)$ is budget feasible.
\end{lemma}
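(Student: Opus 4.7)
\begin{proofsk}
My plan is to bound each threshold payment $p_{i_j}$ for a winner $i_j\in S_{k-1}$ (indexed in greedy order as $\{i_j\}=S_j\setminus S_{j-1}$) by a quantity proportional to $m_{i_j}(S_{j-1})$, and then sum via the telescoping identity $\sum_{j=1}^{k-1}m_{i_j}(S_{j-1})=v(S_{k-1})$.

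\textbf{The per-winner bound.} I aim to establish $p_{i_j}\cdot v(S_k) \leq B\cdot m_{i_j}(S_{j-1})$ for each $j$. Granted this, summation gives
\[
\sum_{j=1}^{k-1} p_{i_j} \;\leq\; \frac{B\cdot v(S_{k-1})}{v(S_k)} \;\leq\; B,
\]
where the last step uses $v(S_{k-1})\leq v(S_k)$ by monotonicity of $v$ (the inequality is in fact strict since $v(S_{k-1})\leq \tfrac12 v^*<v(S_k)$ by the stopping rule).

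\textbf{Deriving the per-winner bound.} Fix $i_j$ and consider the mechanism under the perturbed bid $(c',c_{-i_j})$ with $c'\geq c_{i_j}$. Write $v^*(c')=\opt(A,B)$ under these bids, which is a non-increasing step function of $c'$, and let $T_0\subsetneq T_1\subsetneq\cdots$ be the greedy bang-per-buck order on $A\setminus\{i_j\}$, independent of $c'$. As $c'$ grows, $i_j$'s insertion position $\ell=\ell(c')\geq j$ in the combined order is non-decreasing, and $i_j$ remains a winner iff $v(T_{\ell-1}\cup\{i_j\})\leq \tfrac12 v^*(c')$. The threshold $p_{i_j}$ is the supremum of $c'$ meeting this condition. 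At $c'=p_{i_j}$, one of two boundary conditions binds: either (a) $i_j$'s bang-per-buck on $T_{\ell-1}$ equals that of the succeeding item $t_\ell$ in the without-$i_j$ order, giving $p_{i_j}=c_{t_\ell}\cdot m_{i_j}(T_{\ell-1})/m_{t_\ell}(T_{\ell-1})$; or (b) the winning inequality is tight, $v^*(p_{i_j})=2\,v(T_{\ell-1}\cup\{i_j\})$. In both cases, the per-winner bound will follow by combining (i) submodularity---which yields $m_{i_j}(T_{\ell-1})\leq m_{i_j}(S_{j-1})$ since $T_{\ell-1}\supseteq S_{j-1}$ whenever $\ell\geq j$, (ii) bang-per-buck monotonicity along the greedy order on $A$, which controls $c_{t_\ell}/m_{t_\ell}(T_{\ell-1})$ in terms of $B/v(S_k)$, and (iii) an exchange argument showing $v^*(p_{i_j})\geq v(S_k)$ by constructing a feasible subset of cost at most $B$ achieving value at least $v(S_k)$ under the perturbed bid.

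\textbf{Main obstacle.} The delicate step will be case (b), where controlling how $v^*$ drops as $c_{i_j}$ increases requires an exchange argument on the optimum subset to maintain $v^*(p_{i_j})\geq v(S_k)$ despite the perturbation. The specific parameter $\alpha=\tfrac12$ is exactly what provides enough slack for the exchange argument to succeed and for the telescoped ratio $v(S_{k-1})/v(S_k)$ to stay at most $1$.
\end{proofsk}
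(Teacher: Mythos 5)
Your high-level plan---bound each threshold payment proportionally to the marginal $m_{i_j}(S_{j-1})$ and telescope---is the right one, and it is essentially what the paper does. But the specific per-winner bound you aim for, $p_{i_j}\cdot v(S_k)\le B\cdot m_{i_j}(S_{j-1})$, i.e.\ $p_{i_j}\le m_{i_j}(S_{j-1})\,B/v(S_k)$, is \emph{strictly stronger} than what is needed or provable here, and this overcommitment is a genuine gap. The paper shows $p_i\le m_i(S_{i-1})\,B/v(S_{k-1})$, and telescoping then gives $\sum p_i\le B$ \emph{exactly}: the denominator $v(S_{k-1})$ cancels against $\sum m_i(S_{i-1})=v(S_{k-1})$. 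Your version, with $v(S_k)>v(S_{k-1})$ in the denominator, would give a stronger conclusion $\sum p_i< B$, but the deviation/contradiction argument cannot produce it. Concretely, tracing the standard contradiction (a seller $i$ deviates above the claimed threshold but still wins at position $j$, and one lower-bounds $c(S^*)$), one obtains $c(S^*)>\frac{B}{D}\bigl(v(S^*)-v(S'_j)\bigr)$ where $D$ is the denominator in your claimed threshold. The contradiction $c(S^*)>B$ needs $v(S^*)-v(S'_j)\ge D$; with $D=v(S_{k-1})\le \tfrac12 v^*$ this holds because $v(S'_j)\le\tfrac12 v^*$, but with $D=v(S_k)>\tfrac12 v^*$ it fails. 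So the choice of $v(S_k)$ in the denominator is not a harmless slackening; it breaks the argument.

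Beyond that, your case split into (a) a positional tie and (b) a tight winning inequality is not how the proof actually goes, and you explicitly leave case (b) open as the ``main obstacle.'' That obstacle is precisely where the proof lives. The paper handles it without casework: assume $i$ deviates above the threshold and is still selected at some step $j$ with post-deviation prefix sets $S'_\cdot$; then expand $c(S^*)\ge b(S^*\cup S'_j)-b(S'_j)$ over the elements of $S^*\setminus S'_j$ in greedy order, use submodularity plus the fact that $i$ is the bang-per-buck argmax at step $j$ to lower-bound this by $\frac{b_i}{m_i(S'_{j-1})}\bigl(v(S^*)-v(S'_j)\bigr)$, then use $m_i(S'_{j-1})\le m_i(S_{i-1})$ (since $S_{i-1}\subseteq S'_{j-1}$) and the deviation assumption to reach $c(S^*)>B$ under $\alpha=\tfrac12$. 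If you replace $v(S_k)$ by $v(S_{k-1})$ in your per-winner target and carry out this single unified exchange argument rather than the (a)/(b) dichotomy, your sketch becomes the paper's proof.
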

\begin{proof}
Let $p_i$ be the threshold payment for agent $i$. Let $S_{k-1}=$\textsc{Greedy-EOM}$(0.5,A,B)$. For every $i \in S_{k-1}$, we show that if $i$ deviates to a bid of $b_{i} > m_{i}(S_{i-1})\frac{B}{v(S_{k-1})}$, he cannot be selected, implying that the threshold payment $p_i \le m_{i}(S_{i-1})\frac{B}{v(S_{k-1})}$.
By proving this we get that
$\sum_{i \in S_{k-1}}p_{i} \leq \sum_{i \in S_{k-1}}m_{i}(S_{i-1}) \frac{B}{v(S_{k-1})}=B$, where the last inequality hold by recalling that $m_{i}(S_{i-1})=v(S_{i})-v(S_{i-1})$, so the sum telescopes.

So the mechanism is budget feasible.

We prove the inequality claimed above by contradiction: assume that $i$ deviates to $b_{i}>m_{i}(S_{i-1})\frac{B}{v(S_{k-1})}$ and is still in the winning set. Let $b$ be the new vector of costs with $i$ bidding $b_i$ and all other agents bidding their true cost. Note that the order in which items are considered after item $i-1$ is also effected by the change in $i$'s claimed cost. Now let $j$ be the step in the mechanism in which $i$ is added to the winning set after he deviates to $b_{i}$ and $S'_j$ be the wining set after that step, where $S'_z$ for $z\in [n]$ is defined similar to $S_z$ but with cost vector $b$ instead of $c$. Let $S^*$ be the optimum solution with $v(S^*)=v^*$. Let $S^*\setminus S'_j=\{t_1,t_2,\ldots,t_q\}$, $T_0=\emptyset$, and $T_z=\{t_l:l\in[z]\}$. Since $i$ is the only item that has increased his cost and $i \in S'_{j}$, we have
\begin{align*}
c(S^*)&\geq b(S^*\cup S'_{j})-b(S'_{j})= \sum_{z \in [q]} m_{t_z}(S'_{j}\cup T_{z-1})\frac{b_{t_z}}{m_{t_z}(S'_{j}\cup T_{z-1})}
\end{align*}
By submodularity and by the fact that the mechanism choose the ordering having item $i$ in position $j$ (with costs $b$) we have that $m_{t_z}(S'_{j}\cup T_{z-1})\le m_{t_z}(S'_{j-1})$ and $b_{t_z}/m_{t_z}(S'_{j-1})\le b_i/m_{i}(S'_{j-1})$. Using these two inequalities we get
\begin{align*}
\sum_{z \in [q]} m_{t_z}(S'_{j}\cup T_{z-1})\frac{b_{t_z}}{m_{t_z}(S'_{j}\cup T_{z-1})} & \geq \sum_{z \in [q]} m_{t_z}(S'_{j}\cup T_{z-1})\frac{b_{t_z}}{m_{t_z}(S'_{j-1})} \geq \sum_{z \in [q]} m_{t_z}(S'_{j}\cup T_{z-1})\frac{b_{i}}{m_{i}(S'_{j-1})}\\
& %\geq \frac{b_{i}}{m_{i}(S'_{j-1})} \sum_{z \in [q]} m_{t_z}(S'_{j}\cup T_{z-1})
= \frac{b_{i}}{m_{i}(S'_{j-1})} (v(S^*\cup S'_{j})-v(S'_{j}))
\geq \frac{b_{i}}{m_{i}(S'_{j-1})} (v(S^*)-v(S'_{j}))
\end{align*}
Now by using the contradiction assumption, and the fact that $S_{i-1} \subseteq S'_{j-1}$, we get
\begin{align*}
\frac{b_{i}}{m_{i}(S'_{j-1})} (v(S^*)-v(S'_{j})) > B\frac{v(S^*)-v(S'_{j})}{v(S_{k-1})}
\end{align*}
By combining the above inequalities and using the fact that $v(S'_{j}),v(S_{k-1})<\alpha v^*$ and $\alpha =0.5$, we have $c(S^*)>B$ which is a contradiction, so the mechanism is budget feasible.
\end{proof}

\begin{corollary}
By using threshold payments, \textsc{Random-EOM}$(0.5, A, B)$ is universally truthful, budget feasible and a 4 approximation of the optimum.
\end{corollary}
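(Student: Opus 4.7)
The plan is to assemble this corollary directly from the two results immediately preceding it, so the proof is essentially a verification that each required property holds in each branch of the randomization. First, I would appeal to Theorem~\ref{THM:EAlphaApprox} with the parameter choice $\alpha = 1/2$: this immediately yields universal truthfulness of \textsc{Random-EOM}$(0.5, A, B)$ and the bound $\frac{2}{\alpha} E[v(S)] = 4\, E[v(S)] \ge opt(A)$, which is exactly the claimed $4$-approximation guarantee.

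Next, I would handle budget feasibility. Since \textsc{Random-EOM}$(0.5, A, B)$ is defined as a convex combination (with probability $1/2$ each) of two deterministic mechanisms, it suffices to verify budget feasibility for both branches separately, because universal truthfulness means the threshold payments are computed per realization. The \textsc{Greedy-EOM}$(0.5,A,B)$ branch is budget feasible by Lemma~\ref{Lem:EAlphaBudget}. The ``return $i^*$'' branch selects a single item whose cost is at most $B$ by the pre-filter $A = \{i : c_i \le B\}$; the threshold payment for $i^*$ is therefore at most $B$ (the largest cost he could declare while still remaining in $A$ and still having the maximum individual value), and so the total payment in this branch is also bounded by $B$.

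Combining these two observations, \textsc{Random-EOM}$(0.5, A, B)$ is universally truthful, budget feasible in every realization, and has expected value at least $opt(A)/4$, completing the proof. There is no real obstacle here: Lemma~\ref{Lem:EAlphaBudget} and Theorem~\ref{THM:EAlphaApprox} have done all the substantive work, and the only care needed is to observe that budget feasibility for a universally truthful randomized mechanism must be checked in each deterministic branch rather than only in expectation, which the two-step case analysis above handles.
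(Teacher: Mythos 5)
Your proposal is correct and follows essentially the same route the paper intends: the corollary is stated without proof precisely because it is the direct combination of Theorem~\ref{THM:EAlphaApprox} (truthfulness and the $\frac{2}{\alpha}=4$ approximation bound at $\alpha=0.5$) with Lemma~\ref{Lem:EAlphaBudget} (budget feasibility of the \textsc{Greedy-EOM} branch). Your extra remark that the single-item branch trivially pays at most $B$ because of the pre-filter $A=\{i:c_i\le B\}$, and that budget feasibility must hold in each deterministic realization rather than merely in expectation, is exactly the right way to close the loop.
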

%\begin{proof}
%The proof directly follows from theorem \ref{THM:EAlphaApprox} and lemma \ref{Lem:EAlphaBudget}.
%\end{proof}
Next we offer a simple deterministic version of this mechanism, with a significantly better approximation factor\etnedit{, improving the previously known 8.34 approximation exponential time mechanism
%for monotone submodular functions
of \cite{chen2011} to a guarantee of 4.56}. In order to derandomize \textsc{Random-EOM}, we would like to check if the optimum is large enough compared to the best valued item. To keep the mechanisms monotone, we will compare the value of the highest valued item $i^*$ to the optimum after removing item $i^*$.
%Note that $i^*$ cannot change $opt(A \setminus \{i^*\})$, so since \textsc{Greedy-EOM}$(0.5,A,B)$ is monotone and budget feasible, \textsc{Deterministic-EOM} is truthful and budget feasible.

\begin{theorem}\label{THM:EAlphaDet}
By using threshold payments, \textsc{Deterministic-EOM} is truthful, budget feasible and has an approximation ratio of 4.56.
\end{theorem}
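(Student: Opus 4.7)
My plan is to verify three properties of \textsc{Deterministic-EOM}: monotonicity (giving truthfulness), budget feasibility, and the stated $\frac{5+\sqrt{17}}{2}\approx 4.56$ approximation guarantee. The constant $\beta = \frac{\sqrt{17}-3}{4}$ in the mechanism is chosen precisely so that the two branches contribute the same worst-case bound, since a direct computation gives $1+\frac{1}{\beta} = \frac{2}{1-2\beta} = \frac{5+\sqrt{17}}{2}$.

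For monotonicity, the key observation is that $i^*=\argmax_i v(i)$ depends only on the public values, and $v=opt(A\setminus\{i^*\},B)$ is independent of $c_{i^*}$. Fix a selected agent $j$ and a smaller cost $c'_j\le c_j$. Decreasing $c_j$ only makes the budgeted problem on $A\setminus\{i^*\}$ easier, so $v$ weakly increases. If $j$ was selected in the first branch then $j=i^*$, and since $v$ does not depend on $c_{i^*}$ the condition $v(i^*)\ge \beta v$ still holds and $i^*$ remains the winner. If $j$ was selected in the second branch, then $v(i^*)<\beta v$ is preserved (the left side is unchanged, the right side weakly grows), so we stay in that branch, and \lemref{Lem:EAlphaApprox} guarantees that $j$ remains in \textsc{Greedy-EOM}$(0.5,A,B)$.

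For budget feasibility, the first branch is trivial: only $i^*$ is paid, and the pre-filter $c_i\le B$ caps his threshold at $B$. For the second branch I claim the threshold $t_j^{FM}$ of any winner $j$ under the full mechanism is at most the threshold $t_j^{GE}$ of $j$ under \textsc{Greedy-EOM}$(0.5,A,B)$ in isolation. For $j\ne i^*$, raising $c_j$ weakly decreases $v$ and can only push us into the first branch, where $j$ is not paid; for $j=i^*$, the branching condition does not depend on $c_{i^*}$ at all, so the only reason $i^*$ leaves the winning set as $c_{i^*}$ grows is that he leaves Greedy-EOM. Summing then gives $\sum_j t_j^{FM}\le \sum_j t_j^{GE}\le B$ by \lemref{Lem:EAlphaBudget}.

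For the approximation ratio, submodularity yields the simple bound $opt(A)\le v(i^*)+v$: if an optimal set $S^*$ contains $i^*$ then $v(S^*)\le v(i^*)+v(S^*\setminus\{i^*\})\le v(i^*)+v$ by submodularity and feasibility of $S^*\setminus\{i^*\}$, and otherwise $S^*\subseteq A\setminus\{i^*\}$ gives $v(S^*)\le v$. In the first branch $v(i^*)\ge\beta v$, so $opt(A)/v(i^*)\le 1+1/\beta=\frac{5+\sqrt{17}}{2}$. In the second branch \lemref{Lem:EAlphaApprox} at $\alpha=\frac{1}{2}$ gives $v(S_{k-1})\ge opt(A)/2 - v(i^*)$, and combined with $v(i^*)<\beta v\le \beta\cdot opt(A)$ this becomes $v(S_{k-1})>opt(A)(\frac{1}{2}-\beta)$, yielding $opt(A)/v(S_{k-1})<2/(1-2\beta)=\frac{5+\sqrt{17}}{2}$. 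I expect the most delicate step to be the cross-branch threshold argument in budget feasibility rather than any calculation, which is why the branching condition compares $v(i^*)$ against $opt(A\setminus\{i^*\},B)$ rather than $opt(A,B)$: this choice insulates the branching decision from $i^*$'s own bid.
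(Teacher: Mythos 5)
Your proof is correct and follows essentially the same approach as the paper: both split into the two branches, use the decomposition $opt(A,B)\le v(i^*)+opt(A\setminus\{i^*\},B)$ together with Lemma~\ref{Lem:EAlphaApprox} at $\alpha=\tfrac{1}{2}$, and observe that $\beta=\frac{\sqrt{17}-3}{4}$ is chosen to equalize the worst case over the two branches at $\frac{5+\sqrt{17}}{2}\approx 4.56$. You actually spell out the monotonicity and the cross-branch threshold argument for budget feasibility more carefully than the paper, which dispatches truthfulness and budget feasibility in one sentence by noting that the branching test is independent of $c_{i^*}$ and appealing to Lemmas~\ref{Lem:EAlphaApprox} and~\ref{Lem:EAlphaBudget}.
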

\begin{proof}
$i^*$ cannot change $opt(A \setminus \{i^*\},B)$, so since \textsc{Greedy-EOM}$(0.5,A,B)$ is monotone and budget feasible, \textsc{Deterministic-EOM} is truthful and budget feasible.

In order to prove the approximation ratio, we consider two cases:
\begin{itemize}
\item[Case 1:] $v(i^*)\geq \frac{\sqrt{17}-3}{4} v=\etnedit{\frac{\sqrt{17}-3}{4} }opt(A\setminus \{ i^* \},B)$: in this case the algorithm returns $i^*$, and we have
%$$v(i^*)\geq \frac{\sqrt{17}-3}{4} opt(A\setminus \{i^*\})$$
%so
$$(\frac{4}{\sqrt{17}-3} +1)v(i^*)\geq opt(A\setminus \{i^*\},B) +v(i^*) \geq opt(A,B)$$

\item[Case 2:] $v(i^*)< \frac{\sqrt{17}-3}{4} v \leq \frac{\sqrt{17}-3}{4} opt(A,B)$: let $S_{k-1}=$ \textsc{Greedy-EOM}$(0.5,A,B)$. In this case, by lemma \ref{Lem:EAlphaApprox}, we have
\begin{align*}
&2v(S_{k-1})+2v(i^*)\geq opt(A)\\
\Rightarrow &2v(S_{k-1})+2(\frac{\sqrt{17}-3}{4})opt(A)\geq opt(A)\\
\Rightarrow &\frac{2}{1-\frac{\sqrt{17}-3}{2}}v(S_{k-1})\geq opt(A)
\end{align*}
\end{itemize}
So in any case the approximation ratio is $\frac{4}{\sqrt{17}-3} +1 = \frac{2}{1-\frac{\sqrt{17}-3}{2}}\simeq 4.56$
\end{proof}

\paragraph{The Polynomial Time Oracle Mechanism.}
Next we offer a version of the above mechanism using an oracle in place of the optimum value $opt(A,B)$, as finding the optimum for monotone submodular maximization with a knapsack constraint cannot be done in polynomial time. Note that naively using the outcome of an oracle, which is not optimum, can break monotonicity. To see this, note that in proof of monotonicity for \textsc{Greeding-EOM}$(\alpha,A,B)$, we use the fact that if an item increases his cost, he cannot increase the value of $opt(A,B)$. If we replace $opt(A,B)$ with the outcome of a sub-optimal oracle (for instance a greedy algorithm), this is no longer true: if one increases the cost of all the items that are not in the optimum set to be more than the budget, any reasonable approximation algorithm for submodular maximization (for instance the greedy algorithm in \cite{greedy2004}) can detect and choose all the items that are in optimum set. %\etcomment{I wonder if the example is worth including given the space problems.}\pjcomment{ we can do that if we didn't have any other choice.}

To make our mechanism monotone, we remove $i$ before calling the oracle to decide if we should add $i$ to the set $S$, %This makes the decision whether to include each item independent.
making the items selected  no longer contiguous in the order we consider them.

\begin{minipage}[t]{7cm}
\IncMargin{2em}
\begin{algorithm}[H]
\label{Mech:AlphaGreedy}
\DontPrintSemicolon
\textsc{Greedy-OM}$(\alpha,A,B)$\;
\indent (Greedy Oracle Mechanism)\;
Let $S=\emptyset$\;
\For{$i=1$ to $n$}{
	\If{$v(S_i)\leq \alpha Oracle(A\setminus \{i\},B)$}
	{
		$S=S\cup\{i\}$
	}
	}
\Return $S$
\end{algorithm}
\IncMargin{-2em}
\end{minipage}
\begin{minipage}[t]{7cm}
\IncMargin{2em}
\begin{algorithm}[H]
\label{Mech:AlphaGeneral}
\DontPrintSemicolon
\textsc{Random-OM}$(\alpha,A,B)$\;
\indent (Random Oracle Mechanism)\;
Let $A= \{i:c_i\leq B\}$\;
Let $i^*=argmax_{i\in[n]}(v(i))$\;
\textbf{With probability $\frac{r}{\alpha+2r}$ do}\;
\Indp{
 \Return \textsc{Greedy-OM}$(\alpha,A,B)$\;
 \textbf{halt}\;
 }
\Indm
\Return $i^*$
\end{algorithm}
\IncMargin{-2em}
\end{minipage}

Next we show that \textsc{Greedy-OM}$(\alpha,A,B)$ is monotone and provide its approximation ratio.

\begin{lemma}\label{Lem:AlphaAprox}
For every fixed $\alpha \in (0,1]$, \textsc{Greedy-OM}$(\alpha,A,B)$ is monotone. If $S=$ \textsc{Greedy-EOM}$(\alpha,A,B)$, $k\in[n]$ is the biggest integer such that $S_{k-1}\subseteq S$, $i^*$ is the item with maximum individual value, and assuming \textsc{Oracle} is an $r$ approximation of the optimum, then
$$\frac{r}{\alpha}v(S_{k-1}) + (1+\frac{r}{\alpha})v(i^*)\geq opt(A)$$
\end{lemma}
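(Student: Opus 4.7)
The plan is to prove the monotonicity and the approximation bound separately, with the key underlying observation that $Oracle(A\setminus\{i\},B)$ does not depend on agent $i$'s reported cost. Therefore, the only cost-dependent quantity in the acceptance test for $i$ is $v(S_i)$, and everything boils down to understanding how this value changes when one agent raises his bid.

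For monotonicity, I fix an agent $j$ who was rejected at cost $c_j$ and ask whether he can become accepted at some $c'_j > c_j$. The argument is by a short coupling/induction on the step number: raising $c_j$ only decreases $j$'s marginal bang-per-buck at every step, while leaving every other item's ratio unchanged. Hence at each step $\ell$ strictly before $j$'s original position, the same non-$j$ item achieves the maximum in both the old and the new greedy ordering, so the greedy prefixes coincide up to that point. Consequently $j$'s position in the new ordering is at least as late as before, and the set of items preceding (and including) $j$ can only grow; monotonicity of $v$ then gives $v(S^{c'_j}_{\mathrm{pos}(j)})\geq v(S^{c_j}_{\mathrm{pos}(j)})$. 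Since the rejection condition $v(S_{\mathrm{pos}(j)}) > \alpha\,Oracle(A\setminus\{j\},B)$ held at $c_j$ and the right-hand side is independent of $j$'s cost, it still holds at $c'_j$, so $j$ is again rejected.

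For the approximation bound, I would use the definition of $k$: it is the first item in the greedy order that the mechanism rejected, so $S_{k-1}\subseteq S$ and when $k$ was considered the acceptance test failed, giving $v(S_k) > \alpha\,Oracle(A\setminus\{k\},B)$. Applying the $r$-approximation guarantee of the oracle together with a standard submodular-swap estimate $opt(A\setminus\{k\},B)\geq opt(A)-v(k)\geq opt(A)-v(i^*)$ (if $k$ lies in some optimal set, removing it costs at most $v(k)$ by submodularity; otherwise the optimum can only increase), this yields
\[
v(S_k) \;>\; \frac{\alpha}{r}\bigl(opt(A)-v(i^*)\bigr).
\]
Submodularity also gives $v(S_k)\leq v(S_{k-1})+v(k)\leq v(S_{k-1})+v(i^*)$, and combining the two and rearranging produces the claimed $\frac{r}{\alpha}v(S_{k-1})+(1+\frac{r}{\alpha})v(i^*)\geq opt(A)$. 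The degenerate case in which no item is rejected (so $S_{k-1}=A$) is immediate from $v(A)\geq opt(A)$.

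The main obstacle is the coupling step in the monotonicity proof: I need to argue carefully that raising $j$'s cost does not disturb any greedy pick strictly before $j$. This is where submodularity-driven marginal updates could in principle cause reorderings, but they do not, precisely because the non-$j$ items' ratios $m_\cdot(S_{\ell-1})/c_\cdot$ are unaffected until $j$ enters the picture, and $j$'s own ratio has only dropped. The approximation bound, by contrast, is a short computation once the rejection of item $k$ is translated into an $opt(A)$ bound via the oracle's approximation guarantee and submodularity.
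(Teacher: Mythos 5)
Your proposal is correct and follows essentially the same route as the paper's proof. The approximation bound is the identical chain of inequalities: rejection of item $k$ gives $v(S_k) > \alpha\,Oracle(A\setminus\{k\},B) \geq \tfrac{\alpha}{r}\,opt(A\setminus\{k\},B) \geq \tfrac{\alpha}{r}(opt(A)-v(i^*))$, combined with $v(S_k)\leq v(S_{k-1})+v(i^*)$. For monotonicity, the paper simply writes ``follows from the usual argument, increasing $c_i$ does not affect $Oracle(A\setminus\{i\},B)$ and decreases the item's bang per buck''; you have merely unpacked that sentence into the coupling argument (prefixes coincide up to $j$'s original position, so $j$'s position weakly increases, $S_{\mathrm{pos}(j)}$ weakly grows as a set, and the threshold on the right is unchanged), which is the intended reasoning. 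Your explicit treatment of the degenerate case where no item is rejected is a small extra care the paper leaves implicit.
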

\begin{proof}
Monotonicity of the mechanism follows from the usual argument, increasing $c_i$ does not effect \textsc{Oracle}$(A \setminus \{i\},B)$ and decreases the item's bang per buck in any step. To show the approximation factor, recall that $\{k\} = S_{k} \setminus S_{k-1}$. Since $k$ was not chosen by the mechanism we have
\begin{align*}
v(S_{k-1})+ v(k)&\geq v(S_k) > \alpha Oracle(A \setminus \{k\})
\geq \frac{\alpha}{r}opt(A \setminus \{k\})\\
& \geq \frac{\alpha}{r}(opt(A)-v(k))\geq \frac{\alpha}{r}(opt(A)-v(i^*))
\end{align*}%\hspace{3in} \qed
%By rearranging the inequality and replacing $v(k)$ with $v(i^*)$ we get the desired inequality.
\end{proof}

Next we show that \textsc{Greedy-OM}$(0.5,A,B)$ is budget feasible.

\begin{lemma}\label{Lem:AlphaBudget}
By using threshold payments, \textsc{Greedy-OM}$(0.5,A,B)$ is budget feasible
\end{lemma}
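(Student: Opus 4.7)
The plan is to extend the proof of Lemma~\ref{Lem:EAlphaBudget} to the polynomial-time oracle setting. For each winner $i \in S$, where $S = \textsc{Greedy-OM}(0.5, A, B)$, I upper bound the threshold payment $p_i$ by a contradiction argument. Suppose $i$ deviates to some $b_i > c_i$ and remains a winner; let $j$ be $i$'s new position in the greedy ordering under bid $b_i$ and $S'_{j-1}$ the greedy prefix just before $i$. Because raising $b_i$ above $c_i$ only decreases $i$'s bang-per-buck, the items in positions $1,\ldots,i-1$ of the new ordering coincide with those of the original, so $S_{i-1} \subseteq S'_{j-1}$ and hence $m_i(S'_{j-1}) \le m_i(S_{i-1})$ by submodularity. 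Mirroring the chain of inequalities from Lemma~\ref{Lem:EAlphaBudget}---using the greedy bang-per-buck ordering to bound terms over $S^*_i \setminus S'_j$ and telescoping marginals by submodularity---with $S^*_i$ an optimum set for $opt(A\setminus\{i\},B)$ (so $c(S^*_i) \le B$), one obtains
\[
c(S^*_i) \;\ge\; b(S^*_i \cup S'_j) - b(S'_j) \;\ge\; \frac{b_i}{m_i(S'_{j-1})}\bigl(v(S^*_i) - v(S'_j)\bigr).
\]

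The new technical ingredient, absent in Lemma~\ref{Lem:EAlphaBudget}, is that the selection criterion is now governed by the approximation oracle: $i$ remains a winner iff $v(S'_j) \le 0.5 \cdot Oracle(A\setminus\{i\},B)$. Since $Oracle(A\setminus\{i\},B) \le opt(A\setminus\{i\},B) = v(S^*_i)$, we get $v(S^*_i) - v(S'_j) \ge 0.5\,v(S^*_i)$, and combining with $m_i(S'_{j-1}) \le m_i(S_{i-1})$, any $b_i$ exceeding $\tfrac{2B\cdot m_i(S_{i-1})}{v(S^*_i)}$ would force $c(S^*_i) > B$, a contradiction with the feasibility of $S^*_i$. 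This bounds $p_i$ from above.

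The main obstacle is summing these bounds to at most $B$. In Lemma~\ref{Lem:EAlphaBudget}, the winning set was the contiguous prefix $S_{k-1}$ and the uniform denominator $v(S_{k-1})$ made the marginals telescope exactly to $v(S_{k-1})$, giving $B$. Here, the winning set $S$ need not be a prefix of the greedy ordering, and the denominators $v(S^*_i)$ vary with $i$. Letting $i_p$ be the largest index in $S$, the elementary telescope $\sum_{i\in S} m_i(S_{i-1}) \le \sum_{i \le i_p} m_i(S_{i-1}) = v(S_{i_p})$ still holds (by $S \subseteq S_{i_p}$ and non-negativity of marginals). The real difficulty is producing a uniform lower bound on $v(S^*_i)$ proportional to $v(S_{i_p})$ for every winner $i$: for $i = i_p$ the winner condition directly yields $v(S^*_{i_p}) \ge 2v(S_{i_p})$, and for $i < i_p$ one removes $i$ from an optimum for $A\setminus\{i_p\}$ (losing at most $v(\{i\})$ by submodularity) and invokes the selection condition $v(\{i\}) \le v(S_i) \le 0.5\cdot Oracle(A\setminus\{i\},B) \le 0.5\,v(S^*_i)$. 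Working out this interplay between the different $v(S^*_i)$ to recover a clean uniform bound is the heart of the proof; once it is in place, combining with the telescope above gives $\sum_{i\in S} p_i \le B$ and establishes budget feasibility.
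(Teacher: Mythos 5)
Your high-level scaffolding is right (contradiction on the threshold bid, the chain $c(\cdot) \ge b(\cdot\cup S'_j)-b(S'_j)$ via greedy order and submodularity, and $S_{i-1}\subseteq S'_{j-1}$), but the proof as sketched does not close, and you say as much yourself. The problem is your choice of reference set: by comparing to the per-agent optimum $S^*_i$ for $A\setminus\{i\}$, you get a threshold bound $p_i \le \tfrac{2B\,m_i(S_{i-1})}{v(S^*_i)}$ with a denominator that varies over $i\in S$, and the attempted patch (remove $i$ from an optimum for $A\setminus\{i_p\}$, use $v(\{i\}) \le 0.5\,v(S^*_i)$) only yields $v(S^*_i) \ge \tfrac{4}{3}v(S_{i_p})$ for $i<i_p$, not the needed $v(S^*_i)\ge 2v(S_{i_p})$; working it out gives $\sum_{i\in S} p_i \le \tfrac{3}{2}B$, not $B$. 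So the ``interplay'' you flag as the heart of the proof is not just unfinished bookkeeping --- your quantities are genuinely too weak.

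The paper sidesteps this entirely by using a single reference set and a uniform denominator. Take $S^*$ to be the optimum on the \emph{full} set $A$ with budget $B$ (so $c(S^*)\le B$), and assume for contradiction that $i$ deviates to $b_i > m_i(S_{i-1})\tfrac{B}{v(S)}$ and still wins. The same chain you wrote gives $c(S^*) \ge \tfrac{b_i}{m_i(S'_{j-1})}(v(S^*)-v(S'_j))$, and with $m_i(S'_{j-1})\le m_i(S_{i-1})$ the contradiction hypothesis yields $c(S^*) > B\,\tfrac{v(S^*)-v(S'_j)}{v(S)}$. Now two observations make the fraction at least $1$: since $i$ is still selected, $v(S'_j) \le 0.5\,Oracle(A\setminus\{i\},B) \le 0.5\,opt(A\setminus\{i\},B)\le 0.5\,v(S^*)$; and taking $k'$ to be the largest index in $S$ (so $S\subseteq S_{k'}$ and $k'$ is a winner), the selection condition for $k'$ gives $v(S)\le v(S_{k'})\le 0.5\,Oracle(A\setminus\{k'\},B)\le 0.5\,v(S^*)$. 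Hence $c(S^*) > B$, a contradiction. This produces the uniform threshold bound $p_i \le m_i(S_{i-1})\tfrac{B}{v(S)}$, and the sum telescopes exactly: $\sum_{i\in S} p_i \le \tfrac{B}{v(S)}\sum_{i\in S} m_i(S_{i-1}) \le \tfrac{B}{v(S)}\sum_{i\in S} m_i(S_{i-1}\cap S) = B$. The fix to your proof is therefore: compare against the global $S^*$ rather than $S^*_i$, state the contradiction hypothesis with $v(S)$ in the denominator from the start, and use the winning condition at the \emph{last} selected index to bound $v(S)$ by $0.5\,v(S^*)$.
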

\begin{proof}
Let $p_i$ be the threshold payment for agent $i$. Let $S=$\textsc{Greedy-OM}$(0.5,A,B)$. For every $i \in S$, we show that if $i$ deviates to a cost $b_{i} > m_{i}(S_{i-1})\frac{B}{v(S)}$, he cannot be selected. By proving this and by using the definition of threshold payments we get
$\sum_{i \in S}p_{i} \leq \sum_{i \in S}m_{i}(S_{i-1}) B/v(S) \leq \sum_{i \in S}m_{i}(S_{i-1}\cap S) B/v(S) = B$, so the mechanism is budget feasible.

Assume that $i$ deviates to $b_{i}>m_{i}(S_{i-1})\frac{B}{v(S)}$ and is still in the winning set. Let $b$ be the resulting costs, with $i$'s cost as $b_i$ and for all other players $b_j=c_j$. Note that this change in cost for $i$ changes the order in which the mechanism considers the items.
Let $j$ be the step that the mechanism considers item $i$ with costs $b$. For $z\in [n]$, $S'_z$ is defined similar to $S_z$ but with cost vector $b$ instead of $c$.  \footnote{Note that in \textsc{Greedy-TM}, as well as \textsc{Greedy-EOM}, $S'_j$ would be exactly the winning set of the mechanism at the end of step $j$. However, in \textsc{Greedy-OM}, $S'_j$ may be a super set of the set of items that has been added to the winning set at the end of step $j$.} %\etcomment{The footnote here is awkward, kind or makes it took like $c^3$. not sure what to do.} Let $S^*$ be the optimum set, and let $S^*\setminus S'_j=\{t_1,t_2,\ldots,t_q\}$, $T_0=\emptyset$, and $T_z=\{t_l:l\in[z]\}$. Since $i$ is the only item that has increased its cost and $i \in S'_{j}$, we have
\begin{align*}
c(S^*)&\geq b(S^*\cup S'_{j})-b(S'_{j})= \sum_{z \in [q]} m_{t_z}(S'_{j}\cup T_{z-1})\frac{b_{t_z}}{m_{t_z}(S'_{j}\cup T_{z-1})}
\end{align*}
By submodularity we have  $m_{t_z}(S'_{j})\ge m_{t_z}(S'_{j}\cup T_{z-1})$, and by the fact that $i$ was in position $j$ in the ordering considered,  we have that $b_{t_z}/m_{t_z}(S'_{j-1})\ge b_{i}/m_{i}(S'_{j-1})$. Using these we get:
\begin{align*}
\sum_{z \in [q]} m_{t_z}(S'_{j}\cup T_{z-1})\frac{b_{t_z}}{m_{t_z}(S'_{j}\cup T_{z-1})} &\geq \sum_{z \in [q]} m_{t_z}(S'_{j}\cup T_{z-1})\frac{b_{t_z}}{m_{t_z}(S'_{j-1})}
\geq \sum_{z \in [q]} m_{t_z}(S'_{j}\cup T_{z-1})\frac{b_{i}}{m_{i}(S'_{j-1})}\\
%&\geq \frac{b_{i}}{m_{i}(S'_{j-1})} \sum_{z \in [q]} m_{t_z}(S'_{j}\cup T_{z-1})\\
&= \frac{b_{i}}{m_{i}(S'_{j-1})} (v(S^*\cup S'_{j})-v(S'_{j}))
\geq \frac{b_{i}}{m_{i}(S'_{j-1})} (v(S^*)-v(S'_{j}))
\end{align*}
Now by using fact that $S_{i-1} \subseteq S'_{j-1}$, as the first $i-1$ steps of the algorithm are not effected by $i$'s change of bid, and by the assumption about $b_i$, we have
\begin{align*}
\frac{b_{i}}{m_{i}(S'_{j-1})} (v(S^*)-v(S'_{j})) > B\frac{v(S^*)-v(S'_{j})}{v(S)}
\end{align*}
So by combining the above inequalities we have that the total cost $c(S^*)$ of set $S^*$ is $c(S^*) >B (v(S^*)-v(S'_j))/v(S)$ .
%\begin{align*}
%C(S^*) &> B \frac{v(S^*)-v(S'_j)}{v(S)}
%\end{align*}

Since $i$ was selected after deviating, we have $v(S'_j) \leq \alpha \textsc{Oracle}(A \setminus \{a_i\},B) \leq \alpha v(S^*)$. Let $k'\in [n]$ be the minimum integer such that $S \subseteq S_{k'}$. Since $k'$ was chosen, we have $v(S)\leq v(S_{k'}) \leq \alpha Oracle(A \setminus \{a_{k'}\},B) \leq \alpha v(S^*)$. Since $\alpha=0.5$ we get $C(S^*)>B$ which is contradiction.
\end{proof}

Note that in large markets $v(i^*)\ll opt(A)$, so by combining lemma \ref{Lem:AlphaAprox} and lemma \ref{Lem:AlphaBudget}, we get the following corollary.

\begin{corollary}\label{Cor:AlphaBudget}
In large markets, \textsc{Greedy-OM}$(0.5,A,B)$ is truthful, budget feasible and given an $r$-approximation oracle, achieves $2r$ approximation of the optimum.
\end{corollary}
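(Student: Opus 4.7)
The plan is to assemble this corollary directly from the two preceding lemmas, together with the large-market assumption $\theta = v(i^*)/opt(A) \to 0$. First I would dispatch truthfulness and budget feasibility: Lemma \ref{Lem:AlphaAprox} shows that \textsc{Greedy-OM}$(0.5,A,B)$ is monotone, and combined with the usual threshold-payment scheme (justified by Myerson's characterization in Theorem \ref{THM:Mayerson}), this gives truthfulness. Budget feasibility is precisely the content of Lemma \ref{Lem:AlphaBudget}. So nothing new needs to be proven for these two properties.

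The only substantive step is the approximation guarantee. Plugging $\alpha = 0.5$ into the inequality of Lemma \ref{Lem:AlphaAprox} yields
\[
2r\, v(S_{k-1}) + (1+2r)\, v(i^*) \geq opt(A),
\]
where $S_{k-1}$ is the largest prefix (in the bang-per-buck order) contained in the output $S$. Since $S_{k-1} \subseteq S$ and $v$ is monotone, $v(S) \geq v(S_{k-1})$, hence
\[
2r\, v(S) + (1+2r)\, v(i^*) \geq opt(A).
\]
Dividing by $opt(A)$ and using $v(i^*)/opt(A) = \theta \to 0$ in the large-market limit, this gives $2r\, v(S) \geq opt(A)$ in the limit, i.e.\ a $2r$-approximation as claimed. (More explicitly, for a $\theta$-large market one obtains a $2r/(1 - (1+2r)\theta)$-approximation, matching the convention stated in the footnote of Section~\ref{Sec:Prelims}.)

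The only subtlety I would flag is the gap between $S_{k-1}$ and the actual winning set $S$: Lemma \ref{Lem:AlphaAprox} bounds $v(S_{k-1})$, not $v(S)$, because the items selected by \textsc{Greedy-OM} are not necessarily a prefix of the bang-per-buck order (an item can be skipped and a later one still included). Using $v(S) \geq v(S_{k-1})$ via monotonicity is enough and is not really an obstacle, but it is the one place where one must be careful in order not to confuse this mechanism's structure with that of \textsc{Greedy-TM} or \textsc{Greedy-EOM}, where the output is a true prefix. Beyond this, no real difficulty remains.
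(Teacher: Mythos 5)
Your proof is correct and matches the paper's own derivation: the paper obtains this corollary by directly combining Lemma~\ref{Lem:AlphaAprox} (monotonicity and the $\tfrac{r}{\alpha}v(S_{k-1})+(1+\tfrac{r}{\alpha})v(i^*)\ge opt(A)$ bound) with Lemma~\ref{Lem:AlphaBudget} (budget feasibility at $\alpha=0.5$) and then dropping the $v(i^*)$ term under the large-market assumption, exactly as you do. Your remark distinguishing $S_{k-1}$ from the non-prefix winning set $S$ of \textsc{Greedy-OM} is a correct and worthwhile clarification, handled the same way (via $v(S)\ge v(S_{k-1})$) as implicitly done in the paper.
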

The previously known best oracle mechanism for large markets is due to Anari et al \cite{anari2014} and achieves $2r^2$. We will further improve this bound for the case of large markets \etnedit{to $r+1$} in section \ref{Sec:LargeMarkets}.

By using lemma \ref{Lem:AlphaAprox}, we get the following theorem.
\begin{theorem}\label{THM:AlphaApprox}
\textsc{Random-OM}$(\alpha,A,B)$ is truthful and in expectation achieves $1+\frac{2r}{\alpha}$ of the optimum, assuming the oracle used is an $r$-approximation.
\end{theorem}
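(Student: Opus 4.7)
The plan is to decompose \textsc{Random-OM}$(\alpha,A,B)$ into its two deterministic branches, establish that each is monotone (hence truthful with threshold payments), and then bound the expected value by a convex combination of the two inequalities already available.

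First, for universal truthfulness I would show both component mechanisms are monotone in the reported costs. The ``\textsc{Greedy-OM}$(\alpha,A,B)$'' branch is monotone by Lemma~\ref{Lem:AlphaAprox}. The ``return $i^*$'' branch is monotone as well: the winner $i^*$ is chosen as the item of largest individual value among those with $c_i \le B$, so a winner who lowers his cost remains in $A=\{i : c_i \le B\}$ and remains the argmax (his value is independent of his cost). A randomization over two monotone deterministic mechanisms with threshold payments is universally truthful, which handles the first claim.

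For the approximation guarantee, let $S$ denote the (random) output. With probability $p=\frac{r}{\alpha+2r}$ the mechanism returns the \textsc{Greedy-OM} set, whose value is at least $v(S_{k-1})$ (by monotonicity of $v$, since $S_{k-1}\subseteq$ \textsc{Greedy-OM}$(\alpha,A,B)$), and with probability $1-p=\frac{\alpha+r}{\alpha+2r}$ it returns $\{i^*\}$. Hence
\begin{align*}
E[v(S)] \;\ge\; \frac{r}{\alpha+2r}\, v(S_{k-1}) + \frac{\alpha+r}{\alpha+2r}\, v(i^*)
\;=\; \frac{\alpha}{\alpha+2r}\left(\frac{r}{\alpha}\, v(S_{k-1}) + \Bigl(1+\frac{r}{\alpha}\Bigr) v(i^*)\right).
\end{align*}
Applying Lemma~\ref{Lem:AlphaAprox}, the bracketed expression is at least $opt(A)$, so $E[v(S)] \ge \frac{\alpha}{\alpha+2r}\, opt(A) = \frac{1}{1+2r/\alpha}\, opt(A)$, which is the claimed $1+\frac{2r}{\alpha}$ approximation.

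No step looks like a genuine obstacle: monotonicity of the $i^*$ branch is immediate, monotonicity of the greedy branch is already packaged in Lemma~\ref{Lem:AlphaAprox}, and the probability weights $\frac{r}{\alpha+2r}$ and $\frac{\alpha+r}{\alpha+2r}$ are chosen exactly so that Lemma~\ref{Lem:AlphaAprox} scales into the desired bound. The only subtlety worth flagging is the small gap between $v(S_{k-1})$ and the value of the actual greedy output, which is handled by monotonicity of $v$ and costs nothing in the bound.
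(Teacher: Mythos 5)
Your proof is correct and takes essentially the same approach as the paper's: same decomposition into the two branches, the same convex combination with weights $\frac{r}{\alpha+2r}$ and $\frac{\alpha+r}{\alpha+2r}$, and the same appeal to Lemma~\ref{Lem:AlphaAprox}, with the algebra merely factored rather than multiplied through by $1+\frac{2r}{\alpha}$. You are somewhat more explicit than the paper about the monotonicity of the $i^*$ branch and about using $v(\textsc{Greedy-OM}(\alpha,A,B)) \geq v(S_{k-1})$, but these are small expository differences, not a different route.
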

\begin{proof}
Let $S$ be the outcome of the mechanism. The mechanism chooses $S=$ \textsc{Greedy-OM}$(\gamma,A,B)$ with probability $\frac{r}{\alpha+2r}$ and $i^*$ with probability $1- \frac{r}{\alpha+2r}= \frac{\alpha +r}{\alpha+2r}$. Let $k\in [n]$ be the biggest integer such that $S_{k-1} \subseteq S$. So we have
\begin{align*}
&E[v(S)] \geq \frac{r}{\alpha+2r}v(S_{k-1}) + \frac{\alpha+r}{\alpha+2r}v(i^*)\\
\Rightarrow & (1+\frac{2r}{\alpha})E[v(S)] \etnedit{\ge} \frac{r}{\alpha}v(S_{k-1}) + \frac{\alpha+r}{\alpha}v(i^*)
\end{align*}
So by using lemma \ref{Lem:AlphaAprox} we have
$$(1+\frac{2r}{\alpha})E[v(S)]\geq opt(A)$$
\end{proof}

By combining lemma \ref{Lem:AlphaBudget} and theorem \ref{THM:AlphaApprox} we have the following theorem.

\begin{theorem}\label{THM:Alpha}
\textsc{Random-OM}$(0.5,A,B)$ is truthful, budget feasible and in expectation achieves $1+4r$ of the optimum.
\end{theorem}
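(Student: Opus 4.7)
The plan is to assemble the theorem directly from the two preceding results, Lemma \ref{Lem:AlphaBudget} and Theorem \ref{THM:AlphaApprox}, by specializing the parameter to $\alpha = 0.5$. No new argument is really needed; the content of the theorem is a packaging step.

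First I would handle the approximation guarantee. Theorem \ref{THM:AlphaApprox} already shows that \textsc{Random-OM}$(\alpha, A, B)$ in expectation achieves a $1 + \frac{2r}{\alpha}$ approximation of the optimum. Plugging $\alpha = 0.5$ gives $1 + 4r$. This is just arithmetic and requires no additional work.

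Next I would argue truthfulness and budget feasibility. \textsc{Random-OM}$(0.5, A, B)$ is a randomization over two deterministic sub-mechanisms: with probability $\frac{r}{0.5 + 2r} = \frac{2r}{1+4r}$ it returns the output of \textsc{Greedy-OM}$(0.5, A, B)$, and otherwise it returns the single item $i^*$ of maximum individual value. Both sub-mechanisms are monotone in costs (the single-item rule trivially so, and \textsc{Greedy-OM}$(0.5, A, B)$ by Lemma \ref{Lem:AlphaAprox}), so with threshold payments each is truthful, making the randomized mechanism universally truthful. For budget feasibility, in the $i^*$ branch, we have $c_{i^*} \le B$ by the pre-filter $A = \{i : c_i \le B\}$, so the threshold payment to $i^*$ is at most $B$; in the \textsc{Greedy-OM}$(0.5, A, B)$ branch, Lemma \ref{Lem:AlphaBudget} gives total threshold payments at most $B$. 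Since both branches independently respect the budget, so does the randomized mechanism.

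There is no real obstacle here — the statement is essentially a corollary. The only thing worth double-checking is that the probability weighting $\frac{r}{\alpha + 2r}$ used in \textsc{Random-OM} and the budget-feasibility guarantee of \textsc{Greedy-OM}$(0.5, \cdot, \cdot)$ are both compatible with the same parameter $\alpha = 0.5$, which they are. So the proof would be a one-paragraph invocation of the two lemmas with $\alpha$ substituted.
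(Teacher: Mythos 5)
Your proposal is correct and matches the paper's approach exactly: the paper proves this theorem as a one-line combination of Lemma~\ref{Lem:AlphaBudget} (budget feasibility of \textsc{Greedy-OM} at $\alpha=0.5$) and Theorem~\ref{THM:AlphaApprox} (the $1+\frac{2r}{\alpha}$ approximation ratio), specialized to $\alpha=0.5$. Your additional observation that the $i^*$ branch is trivially budget feasible since $c_{i^*}\le B$ is a helpful elaboration that the paper leaves implicit.
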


%Even though \textsc{Random-OM}$(\alpha,A,B)$ and \textsc{Greedy-OM}$(\alpha,A,B)$ do not have a good approximation ratio, in section \ref{Sec:LargeMarkets} we show how  \textsc{Greedy-OM}$(\alpha,A,B)$ can be combined with \textsc{Greedy-TM}$(\gamma,A,B)$ to get a strong approximation guarantee.

By using the greedy algorithm of \cite{greedy2004} as an oracle, we can improve the approximation ratio to $\frac{2r}{\alpha}$. To achieve this, we change \textsc{Greedy-OM}$(\alpha,A,B)$, so that instead of using $Oracle(A\setminus \{k\},B)$, it uses $\max_{c'_{i}\geq c_{i}}Oracle(A,(c'_{i},c_{-i}))$. We also change the probability of choosing the greedy mechanism's outcome in \textsc{Random-OM}$(\alpha,A,B)$ to $\frac{1}{2}$. By doing so \textsc{Random-OM}$(\alpha,A,B)$ can achieve $\frac{2r}{\alpha}$ instead of $1+\frac{2r}{\alpha}$. By using the greedy algorithm of \cite{greedy2004}, as an oracle, finding $\max_{c_{i}'\geq c_{i}}Oracle(A,(c_{i}',c_{-i}))$ can be done in polynomial time, since we only have to check polynomial number of cases for $c'_{i}$. Furthermore, if $i$ increases his cost, he cannot increase the value of $\max_{c_{i}'\geq c_{i}}Oracle(A,(c_{i}',c_{-i}))$. We omit the proof of the following theorem, as it is analogous to our previous proofs.
\begin{theorem}\label{THM:Alpha-greedy}
The above modification of the \textsc{Random-OM}$(0.5,A,B)$ mechanism is truthful, budget feasible, get expected value of a $4r$ fraction of the optimum. With the greedy algorithm as the oracle, it can be implemented in polynomial time, and is a $4e/(e-1)$-approximation mechanism.
\end{theorem}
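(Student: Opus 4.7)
The plan is to verify the four claims---monotonicity, budget feasibility, the $4r$-approximation, and polynomial-time implementability---by adapting the arguments for \textsc{Greedy-OM}$(0.5,A,B)$ and \textsc{Random-OM}$(0.5,A,B)$.

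For monotonicity (and hence universal truthfulness), I would argue that the modified oracle $\widetilde{O}(i,c):=\max_{c'_{i}\geq c_{i}}\textsc{Oracle}(A,(c'_{i},c_{-i}))$ is non-increasing in $c_i$, since raising $c_i$ shrinks the feasible set over which the maximum is taken; meanwhile $i$'s marginal bang-per-buck is also non-increasing in $c_i$. These are precisely the two properties used in Lemma~\ref{Lem:AlphaAprox} to conclude monotonicity of \textsc{Greedy-OM}, so the same argument transfers: raising $i$'s cost only tightens the acceptance condition $v(S_i)\leq \alpha\widetilde{O}(i,c)$ at the step where $i$ is considered. The $\tfrac{1}{2}$ coin flip in \textsc{Random-OM} is independent of bids, so universal truthfulness follows, and individual rationality is immediate since any $i$ bidding above $B$ is filtered out.

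For budget feasibility I would mimic the proof of Lemma~\ref{Lem:AlphaBudget}: for each winner $i$, show that any deviation $b_i > m_i(S_{i-1})B/v(S)$ is rejected. The only substantive change is that the blocking inequality at the step where deviating $i$ is considered is now $v(S'_j)>\alpha\widetilde{O}(i,b)$. Since $\widetilde{O}(i,b)\geq \textsc{Oracle}(A,b)$ (take $c'_i=b_i$ in the max), the same chain of submodularity and bang-per-buck inequalities as in Lemma~\ref{Lem:AlphaBudget} still yields the contradiction $c(S^*)>B$ when $\alpha=1/2$. Telescoping $\sum_{i\in S}m_i(S_{i-1})=v(S)$ then gives $\sum_{i\in S}p_i\leq B$.

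The improvement from $4r+1$ to $4r$ comes from a sharper version of Lemma~\ref{Lem:AlphaAprox}. Let $k$ be the smallest index with $k\notin S$, so $S_{k-1}\subseteq S$. Because $k$ was rejected,
\begin{equation*}
v(S_k) \;>\; \alpha\,\widetilde{O}(k,c) \;\geq\; \alpha\,\textsc{Oracle}(A,c) \;\geq\; \tfrac{\alpha}{r}\,opt(A,B),
\end{equation*}
where the middle inequality takes $c'_k=c_k$ in the maximum. Hence $v(S)+v(i^*)\geq v(S_{k-1})+v(k)=v(S_k)>\tfrac{\alpha}{r}opt(A,B)$, so the expected output value of \textsc{Random-OM}$(\tfrac{1}{2},A,B)$ is at least $\tfrac{1}{2}(v(S)+v(i^*))>\tfrac{1}{4r}opt(A,B)$. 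The key difference from Theorem~\ref{THM:Alpha} is that the earlier proof had to pass through $opt(A\setminus\{k\},B)\geq opt(A)-v(k)$ and therefore incurred an extra additive $v(i^*)$; taking $c'_k=c_k$ in the new oracle eliminates this loss. For the polynomial-time implementation, I would observe that Sviridenko's greedy depends only on the bang-per-buck ordering, so as $c'_i$ ranges over $[c_i,\infty)$ the position of $i$ in that ordering changes at most $O(n)$ times, and $\widetilde{O}(i,c)$ can be evaluated by running the greedy on these $O(n)$ breakpoints. Combining $r=e/(e-1)$ with the $4r$ bound yields $4e/(e-1)$. The main subtlety---and the reason the bare oracle in \textsc{Greedy-OM} was modified in the first place---is exactly the monotonicity issue flagged before the theorem, which the $\max$ in $\widetilde{O}$ is designed to resolve.
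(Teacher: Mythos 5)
Since the paper explicitly omits a proof of this theorem (``as it is analogous to our previous proofs''), your proposal is filling in exactly the details the authors had in mind, and the overall approach is correct. In particular, your explanation of why the extra additive $1$ disappears --- that the acceptance test for a rejected item $k$ can be instantiated at $c'_k = c_k$, so one compares directly against $\textsc{Oracle}(A,c)$ rather than $\textsc{Oracle}(A\setminus\{k\},B)\geq opt(A)-v(k)$ --- is precisely the point of the modification, and the monotonicity and polynomial-time arguments are the ones intended.

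One detail in the budget-feasibility paragraph is off in a way you should fix. You invoke $\widetilde{O}(i,b)\geq\textsc{Oracle}(A,b)$, but that lower bound on the modified oracle plays no role in the contradiction argument of Lemma~\ref{Lem:AlphaBudget}. What the chain actually needs, to convert the acceptance condition $v(S'_j)\leq\alpha\,\widetilde{O}(i,b)$ (and similarly $v(S)\leq v(S_{k'})\leq\alpha\,\widetilde{O}(k',c)$) into $v(S'_j),v(S)\leq\alpha v(S^*)$, is the \emph{upper} bound $\widetilde{O}(\cdot,\cdot)\leq opt(A,c)=v(S^*)$. This holds because $\widetilde{O}(i,b)=\max_{c'_i\geq b_i}\textsc{Oracle}(A,(c'_i,c_{-i}))\leq\max_{c'_i\geq b_i}opt(A,(c'_i,c_{-i}))\leq opt(A,c)$, where the last step uses that the deviation is upward, $b_i>c_i$ (since $i$ is a true winner and the mechanism is monotone, the threshold payment exceeds $c_i$). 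With that substitution the contradiction $c(S^*)>B$ goes through exactly as in Lemma~\ref{Lem:AlphaBudget}. Two cosmetic slips: $v(S_{k-1})+v(k)\geq v(S_k)$ rather than equality, and the telescoping gives $\sum_{i\in S}m_i(S_{i-1})\leq\sum_{i\in S}m_i(S_{i-1}\cap S)=v(S)$ rather than equality; neither affects the conclusion.
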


For calculating the agents' threshold payments of our oracle mechanisms in this section, it is enough to check what is the maximum cost that each agent $i$ can declare such that she is still in the winning set. Similar to section \ref{Sec:Param1}, for each agent $i$, this number can simply be computed by checking where in the sorted list of agents by their marginal bang-per-buck this agent can appear such that the inequality of \textsc{Greedy-EOM}$(\alpha,A,B)$ (for the exponential time mechanisms) and the inequality of \textsc{Greedy-OM}$(\alpha,A,B)$ (for polynomial time mechanisms) still hold. The characterization of these threshold payments is similar to the payment characterization of the oracle mechanisms in \cite{anari2014}. 
\label{Sec:Param2}

\section{A Simple \lowercase{$1+\frac{e}{e-1}$} Approximation mechanism for Large Markets}
\label{Sec:LargeMarkets}
In this section we combine the two greedy parameterized mechanisms of Section \ref{Sec:Param}, \textsc{Greedy-OM}$(\alpha,A,B)$ and \textsc{Greedy-TM}$(\gamma,A,B)$ to improve the approximation guarantee for large markets. %In large markets it is even more important that the mechanism has polynomial running time.
Given a polynomial time $r$ approximation oracle, our simple, deterministic, truthful, and budget feasible mechanism in this section has an approximation ratio of $1+r$ and runs in polynomial time.

\begin{wrapfigure}[6]{r}{2.4in}
\IncMargin{2em}
\vspace{-0.15in}
\begin{algorithm}[H]
\label{Mech:Large}
\DontPrintSemicolon
\textsc{Deterministic-Large}$(\alpha,\gamma, A, B)$\;
Let $A= \{i:c_i\leq B\}$\;
Let $S_{\alpha}= $ \textsc{Greedy-OM}$(\alpha,A,B)$\;
Let $S_{\gamma}= $ \textsc{Greedy-TM}$(\gamma,A,B)$\;
\Return $S_{\alpha} \cap S_{\gamma}$
\end{algorithm}
\IncMargin{-2em}
\end{wrapfigure}

At first glance, \textsc{Deterministic-Large} seems worse than both of \textsc{Greedy-OM} and \textsc{Greedy-Tm}, since its winning set is the intersection of the wining sets of these mechanisms. However, taking the intersection of these mechanisms will allow us to choose the value of $\alpha$ and $\gamma$ to be higher than $0.5$ while keeping the mechanism budget feasible.\pjcomment{Is there a way to make the box be in the right and the text on the left?}

First, note that the intersection of two monotone mechanisms is monotone.

\begin{proposition}\label{Prop:intersection}
For two monotone mechanisms $M_1$ and $M_2$, the mechanism $M$ that outputs the intersection of the winning set of $M_1$ and the winning set of $M_2$ is monotone.
\end{proposition}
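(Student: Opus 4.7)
The plan is to unfold the definition of monotonicity from Theorem~\ref{THM:Mayerson} and check it directly on the intersection, using monotonicity of $M_1$ and $M_2$ componentwise. Recall that a selection rule $f$ is monotone if whenever $i \in f(c_i, c_{-i})$ and $c'_i \le c_i$, then $i \in f(c'_i, c_{-i})$. Let $f_1$, $f_2$ be the selection rules of $M_1$ and $M_2$, and let $f(c) = f_1(c) \cap f_2(c)$ be the selection rule of $M$.

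First I would fix an arbitrary agent $i$ with bids $c_i$ and $c'_i \le c_i$, and the bids $c_{-i}$ of all other agents. Assume $i \in f(c_i, c_{-i})$; then by definition of intersection, $i \in f_1(c_i, c_{-i})$ and $i \in f_2(c_i, c_{-i})$. Applying the monotonicity of $M_1$ with the lower bid $c'_i$ gives $i \in f_1(c'_i, c_{-i})$, and the same application to $M_2$ gives $i \in f_2(c'_i, c_{-i})$. Intersecting these two memberships yields $i \in f(c'_i, c_{-i})$, which is exactly the monotonicity condition for $M$.

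This is essentially the entire argument; there is no subtle step since the condition is a pointwise statement about a single agent and the intersection operator commutes with pointwise implications of the form "membership is preserved under lowering $c_i$". I do not anticipate any obstacle; the only thing to be mindful of in the write-up is to state monotonicity in the exact form used in Theorem~\ref{THM:Mayerson} so that the application to both $M_1$ and $M_2$ is transparent.
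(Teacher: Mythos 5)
Your argument is correct and is essentially the paper's proof stated in the contrapositive direction: the paper assumes $i$ is not selected by $M$, observes that $i$ must then be unselected by at least one of $M_1, M_2$, and applies that mechanism's monotonicity, whereas you assume $i$ is selected by $M$ and push membership in both $f_1$ and $f_2$ down to the lower bid. The two formulations are explicitly equivalent per Theorem~\ref{THM:Mayerson}, so no substantive difference exists.
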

\begin{proof}
Assume an agent $i\in A$ is not in the winning set of $M$. This is because either $i$ is not in the winning set of $M_1$ or not in the winning set of $M_2$ (or both). W.l.o.g assume that $i$ is not in the winning set of $M_1$. Since $M_1$ is monotone, if $i$ increases his bid, he still will not be in the winning set of $M_1$. This means that after $i$ increases his bid, he is still not in the winning set of $M$. Therefore, $M$ is monotone.
\end{proof}

Next we give a parameterized approximation guarantee for \textsc{Deterministic-Large}$(\alpha,\gamma, A, B)$.

\begin{lemma}\label{Lem:LargeApprox}
Assuming the large market assumption, for every fixed value of $\alpha, \gamma \in (0,1]$ \textsc{Deterministic-Large}$(\alpha,\gamma,A,B)$ is monotone. Furthermore, with an $r$ approximation oracle, it has a worst case approximation ratio of $\max(1+\frac{1}{\gamma},\frac{r}{\alpha})$.
\end{lemma}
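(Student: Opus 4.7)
The plan is to handle monotonicity first, which is immediate, and then focus on bounding $v(S_\alpha \cap S_\gamma)$ in the large market limit by comparing the two mechanisms on the bang-per-buck ordering. Monotonicity follows by assembling three previously established facts: \lemref{Lem:GammaMon} gives monotonicity of \textsc{Greedy-TM}$(\gamma,A,B)$, \lemref{Lem:AlphaAprox} gives monotonicity of \textsc{Greedy-OM}$(\alpha,A,B)$, and \proref{Prop:intersection} says that the intersection of two monotone allocations is monotone.

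For the approximation, the key observation is that both mechanisms scan the items in the common bang-per-buck order defined at the start of \secref{Sec:Param}. Let $k_\gamma$ be the first index rejected by \textsc{Greedy-TM}$(\gamma,A,B)$, so that $S_\gamma = S_{k_\gamma-1}$ is a prefix of that ordering. Let $k_\alpha$ be the largest index with $S_{k_\alpha-1}\subseteq S_\alpha$ given by \lemref{Lem:AlphaAprox}; by definition $k_\alpha\notin S_\alpha$, i.e.\ $v(S_{k_\alpha}) > \alpha\cdot Oracle(A\setminus\{k_\alpha\},B)$. I would split into two cases depending on which of $k_\gamma,k_\alpha$ is larger.

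\textbf{Case A: $k_\gamma\le k_\alpha$.} Then $S_\gamma = S_{k_\gamma-1}\subseteq S_{k_\alpha-1}\subseteq S_\alpha$, so the intersection equals $S_\gamma$. Applying \lemref{Lem:GammaApprox} gives
\[
(1+\tfrac{1}{\gamma})v(S_\gamma)+\tfrac{1}{\gamma}v(i^*)\ge opt(A),
\]
and the large market assumption $v(i^*)\to 0$ relative to $opt(A)$ yields a $(1+\tfrac{1}{\gamma})$-approximation in the limit.

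\textbf{Case B: $k_\gamma>k_\alpha$.} Then $S_{k_\alpha-1}\subseteq S_{k_\gamma-1}=S_\gamma$ and $S_{k_\alpha-1}\subseteq S_\alpha$, so $S_{k_\alpha-1}\subseteq S_\alpha\cap S_\gamma = S$. By monotonicity of $v(\cdot)$, $v(S)\ge v(S_{k_\alpha-1})$. Applying \lemref{Lem:AlphaAprox} with $k=k_\alpha$ gives
\[
\tfrac{r}{\alpha}v(S_{k_\alpha-1})+(1+\tfrac{r}{\alpha})v(i^*)\ge opt(A),
\]
and again the large market limit $v(i^*)\to 0$ yields a $(\tfrac{r}{\alpha})$-approximation. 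Combining the two cases, $v(S)\ge opt(A)/\max(1+\tfrac{1}{\gamma},\tfrac{r}{\alpha})$, which is the claimed bound.

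The only delicate point is verifying that the two mechanisms really use the same prefix structure on the bang-per-buck order so that the case split makes sense: \textsc{Greedy-OM} scans items but may skip some, whereas \textsc{Greedy-TM} takes a clean prefix. The correct index to compare against is the $k_\alpha$ supplied by \lemref{Lem:AlphaAprox}, which isolates the longest initial prefix that \textsc{Greedy-OM} accepts unbroken. Once this is pinned down, the rest of the argument is a direct invocation of the two parameterized lemmas from \secref{Sec:Param}, with the large market assumption used only to suppress the $v(i^*)$ terms.
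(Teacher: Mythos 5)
Your proof is correct and takes essentially the same approach as the paper. The paper defines a single index $k$ as the largest integer with $S_{k-1}\subseteq S_\alpha\cap S_\gamma$ and then cases on whether $k\notin S_\alpha$ or $k\notin S_\gamma$; you define $k_\alpha$ and $k_\gamma$ separately and case on $\min(k_\alpha,k_\gamma)$, which is the same dichotomy reorganized (your Case~A is the paper's $k\notin S_\gamma$ case, your Case~B is $k\notin S_\alpha$). One small point in your favor: for monotonicity of \textsc{Greedy-TM} you correctly cite \lemref{Lem:GammaMon}, whereas the paper's proof cites \lemref{Lem:GammaApprox} there, which appears to be a typographical slip since that is the approximation lemma rather than the monotonicity lemma.
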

\begin{proof}
From Proposition \ref{Prop:intersection}, Lemmas \ref{Lem:GammaApprox}, \ref{Lem:AlphaAprox} it follows that \textsc{Deterministic-Large} is monotone.

Let $S$ be the outcome of the mechanism. Let $k$ be the biggest integer such that $S_{k-1} \subseteq S$, i.e., $S_{k-1} \subseteq S_{\alpha}$ and $S_{k-1} \subseteq S_{\gamma}$. By definition $k \notin S$, so there are two cases
\begin{itemize}
\item $k \notin S_{\alpha}$: By lemma \ref{Lem:AlphaAprox}, the large market assumption and monotonicity of $v(.)$, we have $\frac{r}{\alpha} v(S) \approx \frac{r}{\alpha}v(S) + (1+\frac{r}{\alpha}) v(i^*) \geq \frac{r}{\alpha}v(S_{k-1})\etedit{ +(1+\frac{r}{\alpha}) v(i^*)} \geq opt(A,B)$.

\item $k \notin S_{\gamma}$: By lemma \ref{Lem:GammaApprox}, the large market assumption and monotonicity of $v(.)$, we have $(1+\frac{1}{\gamma})v(S_{k-1}) \approx (1+\frac{1}{\gamma})v(S_{k-1}) + \frac{1}{\gamma}v(i^*) \geq opt(A,B)$.
\end{itemize}

In both cases we have, $\max(1+\frac{1}{\gamma},\frac{r}{\alpha})v(S) \geq opt(A,B)$ assuming that $v(i^*)$ is negligible.
\end{proof}

Now we provide a simple condition for the budget feasibility of \textsc{Deterministic-Large}$(\alpha,\gamma, A, B)$.

\begin{lemma}\label{Lem:LargeBudget}
If $\alpha \leq \frac{1}{1+\gamma}$ for any $\alpha, \gamma \ge 0$, then by using threshold payments, \textsc{Deterministic-Large}$(\alpha,\gamma, A, B)$ is budget feasible.
\end{lemma}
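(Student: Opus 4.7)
The plan is to mimic the budget-feasibility argument of \textsc{Greedy-OM} (lemma \ref{Lem:AlphaBudget}), while exploiting $i$'s simultaneous membership in $S_\gamma$ to sharpen the bound on the value of the top-$j$ prefix at the deviated position. Concretely, I will show that for every $i\in S$ the threshold payment satisfies $p_i\le m_i(S_{i-1})\,B/v(S_\alpha)$. Summing and using submodularity ($m_i(S_{i-1})\le m_i(S\cap S_{i-1})$ since $S\cap S_{i-1}\subseteq S_{i-1}$), the marginals telescope along $S$ to $\sum_{i\in S}m_i(S\cap S_{i-1})=v(S)\le v(S_\alpha)$, yielding $\sum_{i\in S}p_i\le B$.

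To obtain the single-agent bound, suppose for contradiction that $i\in S$ deviates to a bid $b_i>m_i(S_{i-1})\,B/v(S_\alpha)$ and remains in $S^{\text{new}}=S_\alpha^{\text{new}}\cap S_\gamma^{\text{new}}$. Both mechanisms use the same bang-per-buck ordering and only $i$'s bid changed, so $i$ moves from position $i$ to some position $j\ge i$, the items originally in $S_{i-1}$ stay at positions $1,\dots,i-1$, and the new top-$j$ prefix $S'_j$ satisfies $S_{i-1}\subseteq S'_{j-1}$; hence $m_i(S'_{j-1})\le m_i(S_{i-1})$ by submodularity. Because $i\in S_\gamma^{\text{new}}$, \textsc{Greedy-TM} did not halt before step $j$, so its running set after step $j$ is exactly $S'_j$, and its threshold inequality at $i$ gives
\[
v(S'_j)\;\le\;\frac{\gamma\,m_i(S'_{j-1})\,B}{b_i}\;<\;\gamma\,v(S_\alpha)\;\le\;\gamma\alpha\,v(S^*),
\]
where $S^*$ is any budget-feasible optimum. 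This is strictly tighter than the OM-only bound $v(S'_j)\le\alpha\,v(S^*)$, and is the key gain from taking the intersection.

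Now I plug this tighter bound into the chain of inequalities from the proof of lemma \ref{Lem:AlphaBudget} — which uses only that $i\in S_\alpha^{\text{new}}$ (so the chain $c(S^*)\ge b(S^*\cup S'_j)-b(S'_j)\ge \ldots$ applies), the bang-per-buck ordering, submodularity, $S_{i-1}\subseteq S'_{j-1}$, and the deviation assumption $b_i>m_i(S_{i-1})\,B/v(S_\alpha)$ — to obtain
\[
c(S^*)\;>\;\frac{B}{v(S_\alpha)}\bigl(v(S^*)-v(S'_j)\bigr)\;>\;\frac{B}{v(S_\alpha)}\bigl(v(S^*)-\gamma\,v(S_\alpha)\bigr)\;\ge\;B\Bigl(\tfrac{1}{\alpha}-\gamma\Bigr),
\]
where the last step uses $v(S_\alpha)\le\alpha\,v(S^*)$ (established in the proof of lemma \ref{Lem:AlphaBudget}). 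The hypothesis $\alpha\le 1/(1+\gamma)$ is equivalent to $1/\alpha-\gamma\ge 1$, so $c(S^*)>B$, contradicting feasibility of $S^*$. The main obstacle, and the conceptual heart of the argument, is recognizing that $S'_j$ is simultaneously the relevant top-$j$ prefix for both mechanisms: the TM check contributes an extra factor of $\gamma$ to the bound on $v(S'_j)$, and it is precisely this factor that relaxes the old budget-feasibility threshold $\alpha\le 1/2$ (from \textsc{Greedy-OM} alone) into the new threshold $\alpha(1+\gamma)\le 1$.
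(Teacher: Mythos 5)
Your proof is correct, and it departs from the paper's argument in a way worth pointing out. The paper's proof also reduces to showing $b_i > m_i(S_{i-1})\,B/v(\cdot)$ forces $i$ out, and also reuses the OM chain $c(S^*) > B\bigl(v(S^*)-v(S'_j)\bigr)/v(\cdot)$, but then introduces the ratio $x = v(S'_j)/v(S)$ and splits into two cases: if $\gamma \le x$, the TM inequality $\frac{b_i}{m_i(S'_j)} \le \gamma\frac{B}{v(S'_j)} = \frac{\gamma}{x}\frac{B}{v(S)}$ directly contradicts the deviation assumption; if $\alpha \le \frac{1}{1+x}$, the OM chain alone gives $c(S^*)>B$; and $\alpha \le \frac{1}{1+\gamma}$ guarantees at least one case applies. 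You eliminate the case split: you use the TM inequality not as a standalone contradiction but as a uniform upper bound $v(S'_j) < \gamma v(S_\alpha)$, feed it into the OM chain, and land directly on $c(S^*) > B(1/\alpha - \gamma) \ge B$. This is a cleaner, single-branch argument that makes the mechanism of the $\alpha(1+\gamma)\le 1$ threshold more transparent. Two minor remarks: your parenthetical claim that the OM chain ``uses only that $i\in S_\alpha^{\text{new}}$'' is slightly off — that chain actually only needs $i$ to occupy position $j$ in the new bang-per-buck order (a fact about sorting, not membership) — but this does not affect correctness; and you normalize by $v(S_\alpha)$ where the paper normalizes by $v(S)$, which is also fine since $v(S)\le v(S_\alpha)$ makes the telescoping sum still come out to at most $B$.
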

\begin{proof}
Let $p_i$ be the threshold payment for agent $i$. Let $S=$\textsc{Deterministic-Large}$(\alpha,\gamma, A, B)$. For every $i \in S$, we show that if $i$ deviates to bidding a cost $b_{i} > m_{i}(S_{i-1})\frac{B}{v(S)}$, he cannot be in the winning set. By proving this and by using the definition of threshold payments we get
$\sum_{i \in S}p_{i} \leq \sum_{i \in S}m_{i}(S_{i-1}) \frac{B}{v(S)}\leq \sum_{i \in S}m_{i}(S_{i-1}\cap S) \frac{B}{v(S)} = B$, so the mechanism is budget feasible.

We prove above claim by contradiction: assume that $i$ deviates to $b_{i} >m_{i}(S_{i-1})\frac{B}{v(S)}$ and is in the winning set. Let $b$ be the new cost vector and $j$ be position of $i$ in the new order of items. Let $S'_z$ for $z\in [n]$ be defined similar to $S_z$ but with cost vector $b$ instead of $c$. So $S'_j$ is the set of items that are in the winning set of \textsc{Greedy-TM}$(\gamma,A,B)$ at the end of step $j$ once $i$ is added. Note that $S'_j$ is also equal to the set of all the items that has been considered by \textsc{Greedy-OM}$(\alpha,A,B)$ at the end of its $j$-th step. So by using the same argument as proof of lemma \ref{Lem:AlphaBudget} we get
$$c(S^*)>B\frac{v(S^*)-v(S'_j)}{v(S)} \mbox{\hspace{0.5in} and \hspace{0.5in}} v(S'_j),v(S) \leq \alpha v(S^*)$$
By defining $x = \frac{v(S'_j)}{v(S)}$, we have $v(S'_j)=xv(S) \leq \alpha xv(S^*)$. So we get
\begin{align*}
c(S^*)&>B\frac{v(S^*)-v(S'_j))}{v(S)}
>B\frac{(1-\alpha)v(S^*)}{\alpha xv(S^*)} = B \frac{1-\alpha}{\alpha x}
\end{align*}
so if $1-\alpha \geq \alpha x$, or equivalently, $\alpha \leq \frac{1}{1+x}$ then we get $c(S^*)>B$ which is the desired contradiction.

Since $i \in S_{\gamma}$, we also have
\begin{align*}
\frac{b_{i}}{m_{i}(S'_j)} &\leq \gamma \frac{B}{v(S'_j)} = \frac{\gamma}{x} \frac{B}{v(S)}
\end{align*}
So since $m_{i}(S'_j) \leq m_{i}(S_{i-1})$, if $\gamma \leq x$, we get to a contradiction with the assumption about $b_i$.

The only remaining case is when $\gamma > x$ and $\alpha > \frac{1}{1+x}$. This means that $\alpha > \frac{1}{1+ \gamma}$ which is a contradiction with the property in the statement of lemma, so the mechanism is budget feasible.
\end{proof}

By using Lemmas \ref{Lem:LargeApprox}, \ref{Lem:LargeBudget}, the main theorem of this section follows.

\begin{theorem}\label{THM:LargeMain}
By using threshold payments, \textsc{Deterministic-Large} $(\frac{r}{r+1},\frac{1}{r})$ is truthful, budget feasible, and $1+r$ approximation of the optimum. By using the greedy algorithm with $r=e/(e-1)$ we get a mechanism with approximation guarantee of $\approx 2.58$.
\end{theorem}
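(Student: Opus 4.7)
The plan is to simply verify that the chosen parameters $\alpha = \frac{r}{r+1}$ and $\gamma = \frac{1}{r}$ simultaneously satisfy the budget-feasibility condition of Lemma~\ref{Lem:LargeBudget} and minimize (or at least equalize) the two branches of the approximation bound in Lemma~\ref{Lem:LargeApprox}, and then invoke Myerson's characterization (Theorem~\ref{THM:Mayerson}) to conclude truthfulness.

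First, I would check budget feasibility. With $\gamma = 1/r$, we have $\frac{1}{1+\gamma} = \frac{r}{r+1}$, which is exactly the chosen value of $\alpha$, so the hypothesis $\alpha \le \frac{1}{1+\gamma}$ of Lemma~\ref{Lem:LargeBudget} holds with equality. Therefore, under threshold payments, \textsc{Deterministic-Large}$(\frac{r}{r+1},\frac{1}{r},A,B)$ is budget feasible. Next, Proposition~\ref{Prop:intersection} together with the monotonicity of the component mechanisms (Lemmas~\ref{Lem:GammaMon} and~\ref{Lem:AlphaAprox}) implies the allocation rule is monotone, and combining monotonicity with threshold payments gives truthfulness via Theorem~\ref{THM:Mayerson}.

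For the approximation guarantee, Lemma~\ref{Lem:LargeApprox} yields a bound of $\max\bigl(1+\tfrac{1}{\gamma},\tfrac{r}{\alpha}\bigr)$ under the large-market assumption. Substituting the chosen parameters,
\begin{equation*}
1 + \frac{1}{\gamma} \;=\; 1 + r, \qquad \frac{r}{\alpha} \;=\; \frac{r(r+1)}{r} \;=\; r+1,
\end{equation*}
so both terms equal $1+r$ and the approximation ratio is $1+r$. (The fact that they match is not a coincidence: the parameter choice essentially balances the two slack sources in Lemma~\ref{Lem:LargeApprox} while saturating the budget-feasibility constraint $\alpha(1+\gamma)=1$.) Finally, plugging in $r = \tfrac{e}{e-1}$ for Sviridenko's greedy oracle~\cite{greedy2004} yields $1 + \tfrac{e}{e-1} = \tfrac{2e-1}{e-1} \approx 2.58$.

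There is essentially no obstacle here beyond the arithmetic verification — all of the real work has already been absorbed into Lemmas~\ref{Lem:LargeApprox} and~\ref{Lem:LargeBudget}, and the only subtlety is observing that the parameter choice $(\alpha,\gamma) = (\tfrac{r}{r+1},\tfrac{1}{r})$ simultaneously saturates the budget constraint $\alpha(1+\gamma) \le 1$ and equalizes the two branches of the approximation bound, which is what makes $1+r$ optimal within this framework.
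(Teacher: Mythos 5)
Your proof is correct and follows essentially the same route as the paper: invoke Lemma~\ref{Lem:LargeBudget} (noting $\alpha = \frac{r}{r+1} = \frac{1}{1+\gamma}$ when $\gamma = \frac{1}{r}$) for budget feasibility, use monotonicity of the intersection plus threshold payments for truthfulness, and substitute into the $\max\bigl(1+\tfrac{1}{\gamma},\tfrac{r}{\alpha}\bigr)$ bound of Lemma~\ref{Lem:LargeApprox} to get $1+r$. The only cosmetic difference is that you route monotonicity explicitly through Proposition~\ref{Prop:intersection} and the component lemmas, while the paper cites Lemma~\ref{Lem:LargeApprox} directly (which already packages the monotonicity claim); the substance is identical.
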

\begin{proof}
By lemma \ref{Lem:LargeApprox} we know that the mechanism is monotone, so by using threshold payments, the mechanism is truthful.

Let $S$ be the winning set of the mechanism. By lemma \ref{Lem:LargeApprox}, we have
\begin{align*}
&\max(1+\frac{1}{\gamma},\frac{r}{\alpha})v(S)\geq opt(A,B)\\
\Rightarrow &\max(1+r, \frac{r}{\frac{r}{r+1}})v(S) \geq opt(A,B)\\
\Rightarrow &(1+r)v(S) \geq opt(A,B)
\end{align*}
So the mechanism is $1+r$ approximation of the optimum.

We also have that
\begin{align*}
\alpha = \frac{r}{r+1} = \frac{\frac{1}{\gamma}}{\frac{\gamma+1}{\gamma}} = \frac{1}{1+\gamma}
\end{align*}
So by lemma \ref{Lem:LargeBudget} the mechanism is budget feasible.

By choosing the oracle to be the greedy algorithm in \cite{greedy2004}, which is $\frac{e}{e-1}$ approximation of the optimum, the approximation ratio of \textsc{Deterministic-Large} is $1+\frac{e}{e-1}\approx 2.58$.
\end{proof}

In order to calculate the threshold payment of the agents in winning set of this mechanism, it is enough to calculate the threshold payment of each agent in the threshold mechanism and the oracle mechanism inside \textsc{Deterministic-Large}$(\alpha,\gamma, A, B)$, and declare their minimum as the payment of the agent.

\section{Application to hiring in Crowdsourcing Markets%for knapsack with heterogeneous items
}\label{Sec:Application}
In this section, we consider an application of our knapsack problem with heterogeneous items to the problem of a principal hiring in a Crowdsourcing Market. We consider the model where there is a set of agents $A$ that can be hired and a set of tasks $T$  that the principal would like to get get done. Each agent $a\in A$ has a private cost. We represent the abilities of the agents by a bipartite graph $G(A,T)$, where edge $e=(a,t)$ in the $G$ indicates that agent $a$ can be used for task $t$. The value of buyer for each edge $e$ is $v_e$, which can be different for each edge. The principal has a budget $B$, and would like to hire agents with cost under her budget to maximize the total value of the task done. The optimal solution for this problem is a maximum value matching, subject to the budget constraint on the cost of the agents hired.

Such knapsack with heterogeneous items and agents with matching constraints were studied by \cite{singer2010,chen2011,goel2014}. There are many ways for modeling the heterogeneity of items.
In \cite{chen2011}, this heterogeneity has been defined by having types for items where at most one item can be chosen from each type, corresponding to a bipartite graph where agents have degree 1. \cite{goel2014} consider our model of agents and tasks with a bipartite graph, but assume that the principal has a fixed value for each task completed, independent of the agent that took care of the task, so values of the edges entering a task node $t$ are all equal.
%this has been defined by an underlying bipartite graph whose nodes are agents and tasks. Each edge indicates a task can be assigned to an agent and the final allocation should be a matching that assigns each agent in the winning set to a unique task. Furthermore, the value of the buyer in this model is the summation of value her value for each individual tasks that has been assigned to an agent.
%Note that our model is more general than that of \citep{chen2011} and \citep{goel2014}, however we relax the assumption that the allocation should always assigned each agent in the winning set to a unique task. Note that with this relaxation there may be some agents in the winning set who are not assigned to any tasks

In this section, we apply our technique from section \ref{Sec:LargeMarkets} for this problem. We first consider the general problem defied above, but relax the assumption that the allocation should always assigned each agent in the winning set to a unique task. Turns out that allowing the principal to hire extra agents simplifies the problem. We define the value of the buyer for the winning set $S$ to be the value of the maximum matching on the induced subgraph $G[S,T]$. Then we consider the special case of \cite{goel2014}, where the principal has a value for each task, independent of who completes the task. For this case, we show that a small change in our mechanism (stopping it when the marginal increase in value is 0) results in the same approximation guarantee while observing the hard constraint that all agents hired need to be assigned to a task.

\paragraph{\textbf{General Crowdsourcing Markets}}
The following proposition states that the maximum value of a matching in the induced subgraph $G[S,T]$ for a subset of agents $S$  is a monotone submodular function of $S$. The monotonicity proof follows directly from the definition. It is not hard to prove that the function is also submodular (see \cite{StackExchange} ).

\begin{proposition}\label{Prop:SubmodularMatching}
For $S \subseteq A$, if $f(S)$ is the value of maximum weight matching of the induced subgraph $G[S,T]$ of the weighted bipartite graph $G(A,T)$, then $f(S)$ is a monotone submodular function.
\end{proposition}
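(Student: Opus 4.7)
The plan is to prove the two properties separately. Monotonicity is immediate: if $S \subseteq S'$, then any matching of the induced subgraph $G[S,T]$ is also a matching of $G[S',T]$, so $f(S) \leq f(S')$.

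For submodularity, I would prove the equivalent inequality $f(S_1) + f(S_2) \geq f(S_1 \cup S_2) + f(S_1 \cap S_2)$ for all $S_1, S_2 \subseteq A$ via a matching exchange. Let $M_\cup$ and $M_\cap$ be maximum weight matchings on $G[S_1 \cup S_2, T]$ and $G[S_1 \cap S_2, T]$ respectively. Their multiset union has maximum degree $2$ at every vertex, so it decomposes into paths and cycles whose edges alternate between $M_\cup$ and $M_\cap$. The goal is to redistribute these edges into a matching $N_1$ on $G[S_1, T]$ and a matching $N_2$ on $G[S_2, T]$ with $w(N_1) + w(N_2) \geq w(M_\cup) + w(M_\cap) = f(S_1 \cup S_2) + f(S_1 \cap S_2)$, which immediately gives the inequality since $w(N_i) \leq f(S_i)$.

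For each component of this decomposition, I would color its edges by one of the two natural $2$-colorings (which, by alternation, makes each color class a matching automatically) and then let $N_1$ and $N_2$ collect the respective color classes across all components. The only constraint is that an agent in $S_1 \setminus S_2$ may only appear in $N_1$ and an agent in $S_2 \setminus S_1$ only in $N_2$. Since every $M_\cap$-edge is incident to an agent in $S_1 \cap S_2$, and every interior agent of a path is adjacent to both an $M_\cup$- and an $M_\cap$-edge (hence lies in $S_1 \cap S_2$), the only ``exclusive'' agents appear as path endpoints, each adjacent to an $M_\cup$-edge. The main obstacle to completing the case analysis would be a path having one exclusive-$S_1$ endpoint and one exclusive-$S_2$ endpoint; I would handle this by a parity argument. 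Both endpoint edges would then be $M_\cup$-edges, so the alternating path would start and end with an $M_\cup$-edge and hence have odd length, but both endpoints being agents forces the path to have an even number of edges, a contradiction. In every remaining case one of the two $2$-colorings sends each $M_\cup$-edge with an exclusive endpoint to the side dictated by that endpoint, completing the proof.
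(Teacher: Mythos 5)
Your proof is correct, and it supplies an argument that the paper itself omits: the paper merely asserts that monotonicity ``follows directly from the definition'' and defers submodularity to an external reference. Your matching-exchange argument is the standard one for this fact, and the key observation that makes the redistribution go through is exactly the parity argument you isolate. Let me record why it closes all cases: in the decomposition of the multiset union of $M_\cup$ and $M_\cap$ into alternating paths and cycles, every ``exclusive'' agent (one in $S_1 \triangle S_2$) has degree $1$, hence is a path endpoint, and its unique incident edge is an $M_\cup$-edge since $M_\cap$ touches only $S_1 \cap S_2$. If a path has two agent endpoints, the bipartite structure forces an even number of edges, while having both endpoint edges in $M_\cup$ would force an odd number; so each path has at most one exclusive agent endpoint, and the two exclusive sides never collide in a single component. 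Consequently, for each component there is always a valid choice of which color class to route to $N_1$ and which to $N_2$: if the component has an exclusive $S_1 \setminus S_2$ endpoint, send its $M_\cup$-edges to $N_1$ and its $M_\cap$-edges to $N_2$; symmetrically for $S_2 \setminus S_1$; and if there is no exclusive endpoint (e.g., cycles, or paths whose agent endpoints all lie in $S_1 \cap S_2$), either routing is valid. Since components are vertex-disjoint and each color class within a component is a matching, $N_1$ and $N_2$ are matchings on $G[S_1,T]$ and $G[S_2,T]$ respectively with $w(N_1)+w(N_2) = w(M_\cup)+w(M_\cap)$, which gives $f(S_1)+f(S_2) \geq f(S_1 \cup S_2) + f(S_1 \cap S_2)$ as required.
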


This proposition implies that all our truthful budget feasible mechanisms for submodular valuations can be used for this model.

\begin{corollary}
Any budget feasible truthful mechanism for submodular valuations, can also be used without any loss in the approximation guarantee for case of heterogeneous tasks (items), if it is OK to have agents in the winning set who are not assigned to any tasks.
\end{corollary}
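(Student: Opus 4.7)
The plan is to interpret the heterogeneous-tasks hiring problem as a standard single-buyer budget-feasible mechanism design instance with a monotone submodular valuation, and then invoke the given mechanism verbatim. By Proposition~\ref{Prop:SubmodularMatching}, the set function $f(S)$ defined as the value of a maximum-weight matching in the induced subgraph $G[S,T]$ is monotone and submodular on subsets $S \subseteq A$. So $(A, f, B)$ is a valid instance of the model of Section~\ref{Sec:Model}: each agent $a \in A$ has a single private parameter $c_a$, the buyer has budget $B$, and her value for a subset $S$ of hired agents is the monotone submodular quantity $f(S)$.

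Next I would feed this instance to any truthful, budget-feasible mechanism $\mathcal{M}$ designed for monotone submodular valuations, using $f$ as the value oracle (which is computable in polynomial time via weighted bipartite matching). The single-parameter structure is preserved, so $\mathcal{M}$'s allocation rule remains monotone in each $c_a$ and its threshold payments (Theorem~\ref{THM:Mayerson}) remain well-defined. Consequently $\mathcal{M}$ continues to be truthful and budget feasible when instantiated on $f$, because these properties were established for an arbitrary monotone submodular valuation.

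It remains to check that no approximation is lost in the reduction. The benchmark that $\mathcal{M}$ competes against is
\begin{equation*}
\opt_f(A,B) \;=\; \max\bigl\{\, f(S) \;:\; S \subseteq A,\ c(S) \le B \,\bigr\}.
\end{equation*}
But $f(S)$ is by definition the value of the best matching using only agents in $S$, so $\opt_f(A,B)$ is exactly the value of the best matching in $G(A,T)$ whose agent side has total cost at most $B$, i.e., the optimum of the heterogeneous-tasks problem. Hence the ratio $\mathrm{E}[f(\mathcal{M}(A,B))]/\opt_f(A,B)$ guaranteed by $\mathcal{M}$ is exactly the approximation ratio against the true optimum of the original matching problem.

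The only subtlety, and the reason for the hypothesis in the corollary, is that the winning set $S$ returned by $\mathcal{M}$ is chosen to maximize $f$ but need not be a perfectly matched set: some $a \in S$ may be unmatched in an optimal matching of $G[S,T]$. These agents still receive their threshold payments from the buyer even though no task is assigned to them. Since the corollary's hypothesis explicitly permits hiring unassigned agents, this causes no issue, and the approximation guarantee of $\mathcal{M}$ transfers without loss.
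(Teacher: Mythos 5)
Your proposal is correct and follows essentially the same route as the paper: the corollary there is an immediate consequence of Proposition~\ref{Prop:SubmodularMatching}, treating the matching value $f(S)$ as the monotone submodular valuation and running the generic mechanisms unchanged, with the benchmark coinciding with the budgeted matching optimum. Your write-up just spells out explicitly (correctly) the benchmark identification and the role of the ``unassigned agents allowed'' hypothesis, which the paper leaves implicit.
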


\paragraph{\textbf{Hiring with strict matching constraint.}}
Now, consider the case where the buyer's value is defined by summation of her value for each task, i.e. for all the edges that are directly connected to the same task, the value of the buyer for those edge is the same. We argue that in this model, if we add the hard constraint each agent in the winning set should be assigned to a unique task, then with a small change in our mechanisms, all our results still hold. This problem was considered by \cite{goel2014} for large markets, who gave a randomized truthful (in expectation) and budget feasible mechanism with a $1+\frac{e}{e-1}$ approximation guarantee for large markets (the main result of \cite{goel2014}). Next lemma shows how one can use our deterministic truthful budget feasible mechanism for large markets to get the same approximation guarantee.

\begin{lemma}\label{Lem:KnapsackMatching}
For $S \subseteq A$, let $f(S)$ be the value of maximum weighted matching of the induced subgraph $G[S,T]$ of the weighted bipartite graph $G(A,T)$ in which all the edges that connect to the same node of $T$ has the same value. If a maximum weight matching induced by $S \subseteq A$ connects all vertices in $S$ to a vertex in $T$, and for $a \in A \setminus S$, $f(S\cup \{a\})-f(S)>0$, then there is a maximum weight matching induced by $S \cup \{a\}$ which is also assigning each agent to a unique task.
\end{lemma}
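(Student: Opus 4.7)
The plan is to choose $M^*$ among max-weight matchings of $G[S\cup\{a\}, T]$ to be one that, subject to having maximum weight, saturates as many agents as possible, and then to show that this $M^*$ actually saturates every agent in $S\cup\{a\}$, which is exactly the conclusion of the lemma. Write $M$ for the given max-weight matching of $G[S,T]$ that saturates every agent in $S$.

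An easy first observation is that $M^*$ must match $a$: otherwise $M^*$ would be a matching in $G[S,T]$ and would have weight at most $f(S)<f(S\cup\{a\})$, contradicting maximality. The core step is to show that $M^*$ also saturates every $a'\in S$. I would argue by contradiction: suppose some $a'\in S$ is unmatched by $M^*$, and consider the symmetric difference $M\oplus M^*$, which decomposes into a disjoint union of alternating paths and even cycles. Since $a'$ is matched in $M$ but not in $M^*$, it has degree one in $M\oplus M^*$ and is therefore the endpoint of an alternating path $P$ whose first edge is an $M$-edge. Because $M$ saturates all of $S$, the only agent that can be matched by $M^*$ but not by $M$ is $a$ itself, so a parity/endpoint analysis forces the other endpoint of $P$ to be either (i) the agent $a$ (when $P$ has even length), or (ii) a task $t$ saturated by $M$ but free under $M^*$ (when $P$ has odd length).

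The hypothesis that all edges into the same task carry the same weight does the real work here, because it lets me evaluate the weight of $M^{**}:=M^*\oplus P$ just by tracking which tasks $M^{**}$ covers. In case (i), every task on $P$ is covered by both $M$ and $M^*$, so $M^{**}$ covers exactly the same tasks as $M^*$ and has the same weight, $f(S\cup\{a\})$; but $M^{**}$ contains no edge incident to $a$, so it is a matching in $G[S,T]$ of value strictly above $f(S)$, a contradiction. In case (ii), $M^{**}$ covers exactly one additional task $t$ and one additional agent $a'$ compared to $M^*$, so its weight is the weight of $M^*$ plus the common weight $v_t$ of edges into $t$. If $v_t>0$, this contradicts maximality of $M^*$; if $v_t=0$, then $M^{**}$ has the same weight as $M^*$ but matches one more agent, contradicting the tie-breaking choice of $M^*$.

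The main obstacle I foresee is keeping the alternating-path bookkeeping airtight: the $v_t=0$ subcase of (ii) is the reason for breaking ties by number of matched agents and would be easy to overlook, while the equal-weight-per-task hypothesis is used exactly once, in case (i), to equate the task sets of $M^*$ and $M^{**}$ even though the edges underneath have been swapped.
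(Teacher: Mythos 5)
Your proof is correct, and it is a genuinely different route from the paper's even though both rest on alternating paths and use the equal-weights-per-task hypothesis exactly once. The paper takes an arbitrary max-weight matching $M'$ of $G[S\cup\{a\},T]$, examines the alternating path $p$ through $a$ in $M\cup M'$, and splits on the type of the other endpoint of $p$: if it is an agent, then $p\cap M$ and $p\cap M'$ cover identical sets of tasks (the one use of the weight hypothesis), forcing $f(S\cup\{a\})=f(S)$, a contradiction; if it is a task, then $(M\setminus p)\cup(M'\cap p)$ is directly exhibited as a max-weight matching saturating all of $S\cup\{a\}$. Your argument instead fixes an extremal $M^*$ (maximum number of saturated agents among max-weight matchings of $G[S\cup\{a\},T]$) and follows the alternating path from a hypothetically unmatched $a'\in S$; the case split is dual to the paper's (other endpoint is $a$, versus a task $t$ free in $M^*$), and both branches end in contradiction rather than construction. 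You correctly foresee that the $v_t=0$ subcase needs the tie-breaking device; the paper never meets a zero-weight issue because the path starting at $a$ necessarily carries the entire strictly positive excess $f(S\cup\{a\})-f(S)$, so the task at its far end (when that case occurs) automatically has positive weight. What your approach buys is uniformity, a clean minimal-counterexample argument with no explicit construction; what the paper's buys is dispensing with the tie-break.
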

\begin{proof}
We use contradiction. Assume that there is a subset of agents $S\subseteq A$ such that there is a maximum matching $M$ in the subgraph of $G$, induced by vertices of $S$ and $T$ that connects each agent in $S$ to a task in $T$. Let $a \in A$ be an agent such that $f(S\cup \{a\})-f(S)>0$ and there is no maximum matching in the subgraph induced by $S'=S \cup \{a\}$ and $T$ that connects each agent in $S'$ to a task in $T$. Let $M'$ be a maximum matching of this induced subgraph. Let $G'$ be the union of edges in $M$ and $M'$ and let $C$ and $P$ be the set of cycles and paths that contain all the edges of $G'$. Since $M$ and $M'$ are both maximum matchings, we have $W(M\cap c)=W(M'\cap c)$ for all $c \in C$. Since the only difference between $S$ and $S'$ is having $a$, there can only be one path $p \in P$ such that $W(p \cap M')>W(p \cap M)$. Furthermore, one of the end points of $p$ should be $a$ and for all other paths $p' \in P$ that $p'\neq p$, $W(p' \cap M) = W(p' \cap M')$. For $p$ there are two cases
\begin{itemize}
\item The edge that is connected to the other endpoint of $p$ is in $M$: in this case, since the value of matching is defined by tasks, $W(p\cap M)=W(p \cap M')$.  Therefore, $W(M)=W(M')$ and $F(S')-F(S)=0$ which is a contradiction.
\item The edge that is connected to the other endpoint of $p$ is in $M'$: In this case if we define a matching $M^*=(M \setminus p) \cup (M' \cap p)$, then $M^*$ will connect each agent in $S'$ to a unique task, which is a contradiction.
\end{itemize}
This means that we reach contradiction in both cases, and the proof is complete.
\end{proof}

For using lemma \ref{Lem:KnapsackMatching} in our mechanisms, we only have to stop considering items in the sorted list of marginal bang-per-bucks whenever the marginal bang-per-buck of the item is 0. Note that since the items are listed in decreasing order of marginal bang-per-buck and we know that the valuation is submodular, doing this will not have any effects on the approximation ratio (since the marginal bang-per-buck of the next items is also 0) and truthfulness (since the threshold payment of an agent whose item has 0 marginal value is 0) of our mechanisms. The following corollary summarizes this result.

\begin{corollary}
The truthful budget feasible threshold and oracle mechanisms, as well as the large markets mechanism for submodular valuations of this paper without any loss in the approximation ratio can be also used for the case of heterogeneous tasks, with the constraint that each agent in the winning set should be assigned to a unique task (matching constraint).
\end{corollary}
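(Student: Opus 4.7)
The plan is to combine Proposition~\ref{Prop:SubmodularMatching} with Lemma~\ref{Lem:KnapsackMatching} and show that a small, harmless modification of each mechanism produces a winning set that satisfies the matching constraint while preserving truthfulness, budget feasibility, and the approximation ratio. First, I would invoke Proposition~\ref{Prop:SubmodularMatching} to treat the maximum matching value $f(S)$ as a monotone submodular set function on $A$, which brings the heterogeneous-tasks problem into the framework of the previous sections. All mechanisms (\textsc{Greedy-TM}, \textsc{Greedy-EOM}, \textsc{Greedy-OM}, \textsc{Deterministic-Large}) can thus be run with $v(\cdot) := f(\cdot)$.

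The modification I would introduce is: when scanning items in the sorted marginal bang-per-buck order, stop as soon as the next item's marginal contribution $m_i(S_{i-1})$ reaches $0$ (equivalently, exclude any agent whose marginal value is $0$ from being considered a winner). By submodularity, once the top-of-list marginal value drops to $0$, all remaining items also have marginal value $0$ with respect to the current set; hence truncating the list there can only remove items that contribute nothing to any superset. Consequently, neither the value of the mechanism's output nor the value of the optimum restricted to agents with positive marginal contributions changes, so the approximation ratio analyses of Sections~\ref{Sec:Param1}, \ref{Sec:Param2}, and \ref{Sec:LargeMarkets} carry over verbatim.

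To prove the matching constraint is met, I would argue by induction on the number of agents added. Initially $S = \emptyset$ admits a trivial matching. Whenever the mechanism adds a new agent $a$ with $f(S \cup \{a\}) - f(S) > 0$, Lemma~\ref{Lem:KnapsackMatching} guarantees the existence of a maximum weight matching on $G[S \cup \{a\}, T]$ in which every agent of $S \cup \{a\}$ is assigned to a unique task, maintaining the inductive invariant. Since our modification ensures the mechanism only ever adds agents of strictly positive marginal value, the final winning set admits such a matching.

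For truthfulness and budget feasibility, the key observation is that excluding zero-marginal-value items is cost-monotone (raising one's cost can never turn a zero marginal into a positive one, nor bring an excluded item into contention ahead of someone with higher bang-per-buck), and the threshold payment of any agent whose marginal value is $0$ is $0$. Thus the monotonicity arguments of Lemmas~\ref{Lem:GammaMon}, \ref{Lem:EAlphaApprox}, \ref{Lem:AlphaAprox}, and Proposition~\ref{Prop:intersection} still apply, and the budget-feasibility proofs (Theorem~\ref{THM:Chen}, Lemmas~\ref{Lem:EAlphaBudget}, \ref{Lem:AlphaBudget}, \ref{Lem:LargeBudget}) are unaffected since they only rely on telescoping sums of marginal values. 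The main obstacle is actually quite subtle: verifying that the ``stop at zero marginal'' rule does not disturb monotonicity in a tricky way through the oracle subroutine of \textsc{Greedy-OM}; this is handled by noting that raising a cost cannot increase $\mathrm{Oracle}(A \setminus \{i\}, B)$ nor turn a zero marginal positive, so the same case analysis as in Lemma~\ref{Lem:AlphaAprox} goes through.
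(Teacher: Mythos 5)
Your proposal follows essentially the same route as the paper: cast the matching value as a monotone submodular function via Proposition~\ref{Prop:SubmodularMatching}, truncate the greedy scan at the first item of zero marginal bang-per-buck, then invoke Lemma~\ref{Lem:KnapsackMatching} inductively and observe that zero-marginal agents have zero threshold payment. You also correctly note the point the paper leaves implicit, that raising one's cost can only push an item later in the greedy order and thus, by submodularity, can only decrease its marginal value, so the ``stop at zero marginal'' rule preserves monotonicity. The one place your wording is slightly loose is the claim that ``by submodularity, once the top-of-list marginal value drops to $0$, all remaining items also have marginal value $0$'' --- this actually follows from the greedy bang-per-buck ordering (all later items have ratio at most the current one, hence at most $0$, hence exactly $0$); submodularity is what then guarantees the truncation costs nothing in the approximation bounds. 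Otherwise the proposal matches the paper's proof.
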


For the case of large markets, using Theorem \ref{THM:LargeMain} of section \ref{Sec:LargeMarkets} gives a deterministic truthful and budget feasible mechanism for this problem, matching the $1+\frac{e}{e-1}$ guarantee of the randomized truthful (in expectation) mechanism of \cite{goel2014}, while keeping the mechanism deterministic. 

% Acknowledgments
%\begin{acks}
%\end{acks}
\bibliographystyle{alpha}
\bibliography{BibFile}
\newpage
% Bibliography
\appendix
\section{Deferred Proof of Theorem 3.4}
\label{Sec:Appendix}
Recall \textbf{Theorem \ref{THM:Chen}}, which states that \textsc{Random-TM}$(0.5,A,B)$ and \textsc{Greedy-TM}$(0.5,A,B)$ are budget feasible.

\begin{proof}
In order to prove \textsc{Random-TM}$(0.5,A,B)$ is budget feasible, we only have to show \textsc{Greedy-TM}$(0.5,A,B)$ is budget feasible, since when $i^*$ is selected, his threshold payment is $B$ and the mechanism is budget feasible.

Let $p_i$ be the threshold payment for agent $i$. Let $S_{k-1}=$\textsc{Greedy-TM}$(0.5,A,B)$ the set returned by the greedy threshold mechanism. For every $i \in S_{k-1}$, we show that if $i$ deviates to bidding a cost $b_i > m_i(S_{i-1})\frac{B}{v(S_{k-1})}$, then he would not be selected. This will imply that the threshold payment $p_i$ for player $i$ is at most $p_i\le m_i(S_{i-1})\frac{B}{v(S_{k-1})}$, and so we get
$\sum_{i \in S_{k-1}}p_i \leq \sum_{i \in S_{k-1}}m_i(S_{i-1}) \frac{B}{v(S_{k-1})}= B$, so the mechanism is budget feasible.

Consider the run of \textsc{Greedy-TM} where $i$ deviated to bidding $b_i>m_i(S_{i-1})\frac{B}{v(S_{k-1})}$ and is selected, while all other players bid truthfully. Let $b$ denote the resulting cost-vector. Note that by bidding higher, player $i$ would occur later in the order.
Let $j$ be the step in which $i$ would occur, after he deviates to bidding cost $b_i$. Let $S'_j$ be the items that are in the winning set at the end of this step ($S'_z$ for $z\in [n]$ is defined similar to $S_z$ but with cost vector $b$ instead of $c$, with the change in one cost also effecting the order of items after item $i$). If $i$ is in the winning set, we have
\begin{align*}
\frac{b_i}{m_i(S'_{j-1})} &\leq 0.5\frac{B}{v(S'_{j-1})}
\end{align*}
Since the items are sorted by their marginal bang per buck, for every $z \in S_{k-1}$, $c_z \leq 0.5 B\frac{m_z(S_{z-1})}{v(S_{k-1})}$, so we have $c(S_{k-1}) \leq 0.5B$. Let $T=S_{k-1} \setminus S'_j=\{t_1,t_2,\ldots,t_q\}$, $T_0=\emptyset$ and $T_z=\{t_l:l\in[z]\}$. So we have
\begin{align*}
v(S_{k-1})-v(S'_{j}) \leq v(S_{k-1}\cup S'_{j}) - v(S'_{j}) = \sum_{z \in [q]} m_{t_z}(S'_{j} \cup T_{z-1}) = \sum_{z \in [q]} c_{t_z} \frac{m_{t_z}(S'_{j} \cup T_{z-1})}{c_{t_z}}
\end{align*}
By submodularity we have
\begin{align*}
\sum_{z \in [q]} c_{t_z} \frac{m_{t_z}(S'_{j} \cup T_{z-1})}{c_{t_z}}&\leq \sum_{z \in [q]} c_{t_z} \frac{m_{t_z}(S'_{j-1})}{c_{t_z}}
\end{align*}
Since $i$ is selected at step $j$ it means it has the highest marginal bang per buck at that step. The cost vectors $b$ and $c$ are also only different in the cost of $i$. So we have
\begin{align*}
\sum_{z \in [q]} c_{t_z} \frac{m_{t_z}(S'_{j-1})}{c_{t_z}} \leq \sum_{z \in [q]} c_{t_z} \frac{m_i(S'_{j-1})}{b_i} \leq \frac{m_i(S'_{j-1})}{b_i} \sum_{z \in [q]} c_{t_z}
\end{align*}
Since $i$ has increased his cost, he cannot be selected before step $i$, so $S_{i} \subseteq S'_{j}$ and $v(S_{i})\leq v(S'_j)$. By using this and contradiction assumption we have
\begin{align*}
\frac{m_i(S'_{j-1})}{b_i} \sum_{z \in [q]} c_{t_z} < \frac{v(S_{k-1})}{B} \frac{B}{2} =\frac{v(S_{k-1})}{2}
\end{align*}
So $v(S_{k-1}) \leq 2v(S'_j)$. By adding this to the previous inequality and replacing $m_i(S'_{j-1})$ with $m_i(S_{i-1})$  (note that we can do this since $S_{i-1} \subseteq S'_{j-1}$), we get to a contradiction.
\end{proof}

%\section{Characterizing Truthful Mechanisms}
%\label{Sec:Appendix2}
%\input{appendix2.tex}

% Appendix
%\appendix
%\section*{APPENDIX}
%\setcounter{section}{1}
%\input{Appendix.tex}

%\appendixhead{ZHOU}

\end{document}